\newtheorem{theorem}{\bf Theorem}
\newtheorem{lemma}{\bf Lemma}
\begin{document}
\title{\LARGE Sleeping Multi-Armed Bandit Learning for Fast Uplink Grant Allocation in Machine Type Communications}
	\author{\IEEEauthorblockN{Samad Ali, \emph{Student Member, IEEE}, Aidin Ferdowsi, \emph{Student Member, IEEE}, Walid Saad, \emph{Senior Member, IEEE}, Nandana Rajatheva, \emph{Senior Member, IEEE}, and Jussi Haapola, \emph{Member, IEEE}}
		\thanks{A preliminary version of this work appeared in the IEEE  GLOBECOM 2018 Workshops \cite{samad_globecom}.}
		\thanks{S. Ali, N. Rajatheva and J. Haapola are with the Centre for Wireless Communications (CWC), University of Oulu, Finland. Emails: \{samad.ali, nandana.rajatheva, jussi.haapola\}@oulu.fi. A. Ferdowsi and W. Saad are with Wireless@VT, Bradley Department of Electrical and Computer Engineering, Virginia Tech, Blacksburg, VA, USA, Emails: \{aidin, walids\}@vt.edu.}}
\maketitle
\vspace{-2cm}
\begin{abstract}
Scheduling fast uplink grant transmissions for machine type communications (MTCs) is one of the main challenges of future wireless systems. In this paper, a novel fast uplink grant scheduling method based on the theory of multi-armed bandits (MABs) is proposed. First, a single quality-of-service metric is defined as a combination of the value of data packets, maximum tolerable access delay, and data rate. Since full knowledge of these metrics for all machine type devices (MTDs) cannot be known in advance at the base station (BS) and the set of active MTDs changes over time, the problem is modeled as a sleeping MAB with stochastic availability and a stochastic reward function. In particular, given that, at each time step, the knowledge on the set of active MTDs is probabilistic, a novel probabilistic sleeping MAB algorithm is proposed to maximize the defined metric. Analysis of the regret is presented and the effect of the prediction error of the source traffic prediction algorithm on the performance of the proposed sleeping MAB algorithm is investigated. Moreover, to enable fast uplink allocation for multiple MTDs at each time, a novel method is proposed based on the concept of best arms ordering in the MAB setting. Simulation results show that the proposed framework yields a three-fold reduction in latency compared to a random scheduling policy since it prioritizes the scheduling of MTDs that have stricter latency requirements. Moreover, by properly balancing the exploration versus exploitation tradeoff, the proposed algorithm can provide system fairness by allowing the most important MTDs to be scheduled more often while also allowing the less important MTDs to be selected enough times to ensure the accuracy of estimation of their importance.
\end{abstract}
\begin{IEEEkeywords} Machine Type Communications, Scheduling, Fast Uplink Grant, Multi-armed Bandits, Internet of Things \end{IEEEkeywords}
\section{Introduction} \label{sec:introduction}
The fifth generation (5G) of cellular communication networks is expected to support Internet of Things (IoT) \cite{IoTin5G} services and applications such as virtual reality \cite{mingzheVR}, autonomous vehicles \cite{aidin_its_magazine}, and unmanned areal vehicles \cite{mozaffariUAV}. To enable such emerging IoT applications, 5G systems must have native support for machine type communications (MTCs). In contrast to enhanced mobile broadband (eMBB) services that require high data rates for large data packets, in MTC, a large number of machine-type-devices (MTDs) must communicate small data packets \cite{dawyM2MMagazine}. Due to the heterogeneous nature of IoT applications, MTC data packets have fundamentally novel requirements in terms of latency, reliability, and security \cite{aidin-deeplearning-journal, aidin_ICC}. These requirements bring forward new cellular networking challenges that include random access channel congestion, signaling overhead management, and a need for satisfying various quality-of-service (QoS) requirements for different IoT applications \cite{schulz2017latency}. Moreover, optimizing the wireless system in terms of throughput and spectral efficiency is challenging since it is hard to acquire uplink channel state information (CSI) of transmitting MTDs at the base station (BS). Therefore, reducing the signaling overhead and latency, while avoiding random access channel congestion are important open problems in MTCs.

MTC can be categorized into two groups depending on whether scheduling requests are sent by MTDs or not. The first MTC group is \emph{coordinated transmission}, in which MTDs perform a random access process and the BS schedules MTDs, similar to conventional cellular systems. Clearly, this method is inefficient since the data packets are small and, hence, the signaling to data packet size ratio is large. In the second method, known as \emph{uncoordinated transmission}, to reduce the signaling overhead, MTDs choose a random uplink radio resource and transmit their data without sending any scheduling request. Both approaches can suffer from severe collisions among transmissions due to the fact that the number of MTDs is often much larger than the number of available resources. In a coordinated transmission, collisions can occur during random access while in the uncoordinated method, they occur during packet transmission. In a massive MTC \cite{MassiveM2M} scenario, such problems become even more challenging to address. The authors in \cite{surveyofaccess} and \cite{RACHM2M2} provide an extensive overview of several proposed solutions for such problems. One possible solution is known as access class barring (ACB) \cite{otpimalACB} where different access classes are assigned to MTDs and in massive access scenario, MTDs with lower class are barred from a transmission. The authors in \cite{nora} leverage ideas from non-orthogonal multiple access and apply them to the random access process so as to identify random access requests from multiple MTDs with the same preamble. In \cite{RACH_BloomFiltering} the authors propose a method to reduce the signaling overhead of random access using signatures and Bloom filtering. Correlation between transmission patterns of different MTDs is exploited in \cite{rach_correlated} to optimize the random access process by reducing the collisions. To avoid wasting radio resources in random access collisions, the authors in \cite{newRACH} propose to attach the MTD identity information in the physical random access channel which will prevent the BS from allocating uplink resources to devices that collided. Clearly, most of the research in this area \cite{otpimalACB, nora, RACH_BloomFiltering, rach_correlated, newRACH} is focused on optimizing random access process for MTC and solving problems associated with collisions. 

For uncoordinated transmission, in \cite{madueno2014reliable}, the authors present a resource allocation approach for a massive number of devices with reliability and latency guarantees. Meanwhile, the work in \cite{abuzainab2016cognitive} presents a game-theoretic model for optimizing the coexistence of MTDs with cellular users in the uplink period. Throughput and outage analysis of uncoordinated non-orthogonal multiple access (NOMA) for massive MTC are presented in \cite{grantfreemassiveNOMA} in which the authors compare the performance of successive joint decoding (SJD) to successive interference cancellation (SIC). A dynamic Compressed sensing (CS) multi-user detection is used in \cite{CS_GrantFree_NOMA} to exploit the user correlation in uncoordinated uplink NOMA for joint user activity detection and decoding, which has much better performance compared to conventional CS based schemes. Even though these prior solutions can improve the performance of MTCs, coordinated access still suffer from heavy signaling overheard and collisions \cite{MassiveM2M, surveyofaccess, otpimalACB, RACH_BloomFiltering, rach_correlated, newRACH}. Moreover, uncoordinated transmissions still also experience non-negligible collisions, particularly in massive access scenarios \cite{madueno2014reliable, abuzainab2016cognitive, grantfreemassiveNOMA, CS_GrantFree_NOMA}. The main drawback of this prior art is that it relies solely on random access process (for sending scheduling requests in coordinated transmission or sending data packets in the uncoordinated scheme) whose performance is optimal only when the number of competing devices is equal to the number of available resources. This clearly does not hold in massive MTC cases since the number of radio resources is limited, and hence, novel solutions are needed to address the uplink resource allocation problem for massive MTCs.

To address the challenges of random access congestion, collisions and high signaling overhead, a middle ground, from the point of view of the uplink resource allocation method between a) fully scheduled by using scheduling requests and b) uncoordinated transmission, can be achieved by using the concept of a \emph{fast uplink grant} \cite{3GPP-fastuplinkgrant} and \cite{LTE14Outlook}. In the fast uplink grant, a BS sends an uplink grant to MTDs without MTDs sending scheduling requests. If the MTDs have data to transmit, they proceed with the transmission, otherwise, the radio resource is wasted \cite{LTE14Outlook}. An overview of challenges and opportunities of the fast uplink grant is provided in \cite{Samad-fastuplinkgrant} and two main challenges associated with the fast uplink grant are outlined. The first challenge is that the set of the MTDs that have data to transmit should be known to the BS. The second challenge is the optimal selection of MTDs for the fast uplink grant allocation. When the number of active MTDs is larger than the number of fast uplink grants that can be allocated, an optimal allocation policy must be developed. Therefore, to realize fast uplink grant allocation, the BS must have a mechanism to predict the set of active MTDs at each time. This is a type of traffic modeling known as \emph{source traffic modeling} \cite{MTCsourceTraffic} which is inherently different from the aggregate traffic modeling \cite{m2mtrafficstudy} at the BS. Source traffic prediction can be categorized into two groups, \emph{periodic traffic} and \emph{event-driven traffic}. Clearly, periodic traffic prediction is easier compared to event-driven traffic prediction. For periodic traffic, one can adopt calendar based pattern mining techniques \cite{periodicpattern}. For event-driven traffic, event detection and source traffic prediction mechanisms must be developed. In \cite{DI_letter}, an MTD traffic prediction method based on the so-called directed information is presented for source traffic prediction. By using the method proposed in \cite{DI_letter} upon detection of an irregular transmission, a set of future active MTDs facing the same event can be detected. The authors in \cite{brownPredictive} propose a predictive resource allocation scheme for event-driven MTC in which MTDs are physically located across a line where their traffic pattern can be predicted.

The second step after predicting the source traffic is the optimal allocation of the fast uplink grants. If the BS has full knowledge of the QoS requirements of all the MTDs, this task is rather trivial. However, in practice, due to privacy and security issues, as well as financial benefits of data for the application, the MTDs might not reveal the nature of the application to the BS. Moreover, the QoS requirements of the MTDs might change at different times, due to changes in channel quality between the MTDs and the BS and the presence of various applications that must send data through a single MTD. Therefore, the BS must perform fast uplink grant allocation, with limited or no prior knowledge about the QoS requirements of the MTDs, and use the information revealed to the BS after the transmission for future fast uplink grant allocation purposes. One natural tool for such a task is multi-armed bandit (MAB) theory which are essentially a class of reinforcement learning problems \cite{sutton1998reinforcement}. MABs have been previously used in other wireless communications problems (e.g., see \cite{MABsinSmallCells} where a review of applications of MABs in small cells is provided.) The authors in \cite{MABd2d} use MABs for channel selection in device-to-device (D2D) communications and in \cite{MABdist}, MABs are used for distributed user association in energy harvesting small cell networks. MAB is also proposed for multi-user channel allocation for cognitive radio networks in \cite{MABmutliusers}. All of the aforementioned works focus on using MAB in problems where it is not possible to get full information on the state of the system and resource allocation with no prior knowledge is required. The optimal fast uplink grant allocation is a similar problem where an optimal resource allocation is needed with no prior knowledge on the QoS requirements of the MTDs in the system. Therefore, MAB theory is a natural tool for our problem.

The main contribution of this paper is to address the problem of optimal fast uplink grant allocation with no prior information about QoS requirement of the MTDs using the MAB theory. We  consider that the BS is not able to perfectly predict the set of the active MTDs and hence, a probability of activity is associated with each MTD at any given time. Therefore, the BS has probabilistic knowledge on the set of active MTDs and we propose a novel MAB algorithm for allocating the fast uplink grant under these conditions. The contributions of this paper can, therefore, be summarized as follows:

\begin{itemize}
\item In order to capture a diverse set of QoS metrics during scheduling, we introduce a compound QoS metric that is a combination of three MTD-specific metrics: a) the value of the data packets, b) maximum tolerable access delay, and c) the data rate. We concretely define this metric by proposing a novel method to model the access delay by mapping it to a value between zero and one using a sigmoid function known as Gompertz function. 
\item To find the optimal MTD that the BS must schedule at each time slot, a novel probabilistic sleeping MAB algorithm is proposed. Sleeping MABs are appropriate to address the problems where the set of of active MTDs change over time. Our probabilistic version of the algorithm takes the probabilistic knowledge on the activity of the each MTD into account. The proposed algorithm combines the probability of activation of each arm with the concept of upper confidence bound (UCB) in the context of sleeping MABs. 
\item We rigorously analyze the regret of the proposed MAB algorithm and decouple the effect of the MTC source traffic prediction errors and the learning process on the regret. We analytically derive the conditions under which the errors in MTC source traffic prediction lead to selecting an MTD with lower utility value thereby increasing the regret of the proposed MAB algorithm.
\item Simulation results show that for any source traffic prediction algorithm with good accuracy, the proposed algorithm is optimal since it achieves logarithmic regret. For example, the proposed framework achieves up to three-fold improvement in the access delay compared to a baseline random scheduling policy.
\item We extend the proposed probabilistic sleeping MAB from single MTD selection to several MTD selection by using the concept of best ordering of bandits and provide an algorithm for scenarios where multiple MTDs can be scheduled at any given time. In this method, MTDs with highest UCB value are selected for transmission, which achieves much better performance in terms of delay and throughput compared to the baseline random allocation policy. Here, our simulation results show two-fold performance improvement in terms of latency compared to a baseline random allocation policy.
\end{itemize}

The rest of the paper is organized as follows. Section \ref{sysmodel} presents the system model and problem formulation. In Section \ref{MABsection}, we introduce the proposed probabilistic sleeping MAB solution and its extension to multiple MTDs and provide the regret analysis and study the effect of the source traffic prediction accuracy on the performance of the MAB algorithm. Numerical results are presented in Section \ref{Simulationresults} and conclusions are drawn in Section \ref{conclutions}.
\section{System Model and Problem Formulation}\label{sysmodel}
Consider the uplink of a cellular system composed of one BS and a set $\mathcal{M}$ of $M$ MTDs that use a fast uplink grant. Scheduling is done at the BS and a fast uplink grant is sent to each scheduled MTD. We assume that the total available bandwidth is divided into resource blocks, each of which is of size $W$ and duration $\tau$. Without loss of generality, we consider the problem of selecting one MTD for the fast uplink grant at each time duration $\tau$. Hereinafter, we use $i$ for indexing MTDs and $t$ for time. Due to the heterogeneous nature of IoT applications, packets are assumed to have different QoS requirements. The system model is presented in Fig. \ref{systemmodelfigure}.

\begin{figure}[t]
	\centering
	\includegraphics[width=0.5\textwidth]{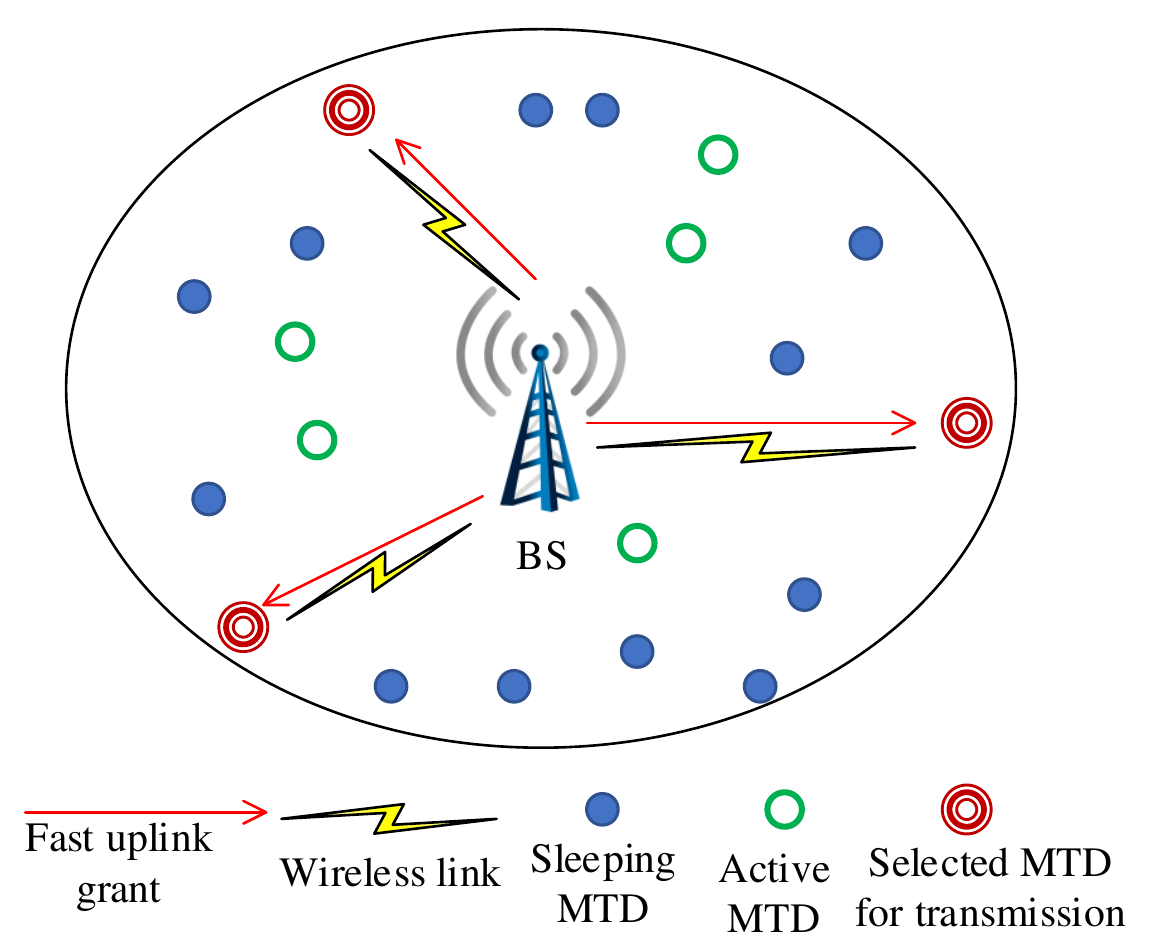}
	\caption{\small Illustration of system model. First, the set of active MTDs are predicted. Next, selected MTDs receive the fast uplink grants and transmit their data.}\label{systemmodelfigure}
\end{figure}

\subsection{Performance Metrics}
We now define three performance metrics that are combined to build a single metric that is used in the problem formulation.
\subsubsection{Value of information}
At time $t$, for each MTD $i$, we define the \emph{value of information} as the assessment of the utility of an information product in a specific usage context \cite{ValueofInformation}. Hence, each packet that arrives at the queue of an MTD $i$ will have an associated value $v_i(t)$. According to \cite{ValueofInformation}, this value can be determined by relative pairwise comparison of all IoT applications and the use of the so-called analytic hierarchy process (AHP) to calculate the importance weight for each packet. This normalized value is derived in the form of a percentage of importance, and hence we choose $v_i(t) \in [0,1]$.
\subsubsection{Maximum tolerable access delay}
Delay in a wireless communication network consists of different components: Processing delay $T_p$ which is a function of hardware and software used by the MTDs, queuing delay $T_q$, and transmission delay $T_t$ which pertains to the delay for the transmission of the data packets through the physical medium. Once the data is transmitted and received at the BS, the time needed for the packet to travel to the final destination through a network of wireless, wired, or fiber link is called routing delay $T_r$. Finally, the access delay $T_a$, which is the main focus of this work, is the time duration from the moment that the packet is ready for transmission, until the MTD receives the uplink resource blocks to transmit the packet. For each data packet of MTD $i$, we consider a \emph{maximum tolerable access delay} $d_i(t_s)$ defined as the total delay that can be tolerated from the time instance $t_s$ at which the data packet is ready to be transmitted at the MTD queue until it is scheduled to be sent. To calculate the total access delay that can be tolerated for each MTD, we first assume that all the other delay components are modeled and subtracted from the total tolerable delay of the packet. We assume $T_p$ and $T_t$ to be constant since the packets are small and always generated by the same devices, and the MTDs are either stationary or have low mobility. Since most of the MTDs have sparse packet transmissions and we can consider that the service time is considerably shorter than the packet inter-arrival times and, hence, the queuing time resulting from other packets in the system is considered to be negligible. Once all the delay components are modeled, we can calculate the maximum tolerable access delay as follows:
\begin{equation}
T_a = T_{\textrm{total}} - T_t - T_p.
\end{equation}
Due to the fact that the values of $T_{\textrm{total}}$, $T_r$, $T_t$, and $T_p$ are constant and that each application that is transmitting through the MTD might have different QoS requirements, the maximum tolerable access delay for each MTD will be different at any given time. Moreover, once a packet is in the MTD queue and waits for to access the channel, after each time step of waiting, its tolerable access delay will be shorter. Therefore, the packets of each MTD might have different tolerable access requirements at different times.

\subsubsection{Throughput}
Once each signal is received at the BS, the signal-to-noise ratio (SNR) is given by:
\begin{equation}
\gamma_i(t) =  \frac{q_i(t)|{h}_i(t)|^2}{WN_0},
\end{equation}
where $h_i(t)$ represent the channel between MTD node $i$ and the BS. $N_0$ is the power spectral density of the noise, $W$ is the bandwidth of the transmission channel, and $q_i(t)$ is the transmit power of MTD $i$. The channel is modeled as $h_i(t) = a_i(t).g_i(t)$ where $g_i(t)\sim\mathcal{CN}(0,1)$ represents the small-scale Rayleigh fading, assumed to be independent at different times \cite{mimoMatlab}. Large scale fading is included in $a_i(t) = 10^{\frac{a_{i,dB}(t)}{10}}$ where $a_{i,dB}(t) = PL_{dB} + X_\sigma$ with $PL_{dB}$ and $X_\sigma$ denoting the path loss and log-normal shadowing with variance $\sigma$. We use the 3GPP path loss model from the BS to MTDs \cite{3GPPPathLossModel} which is given by $PL_{\textmd{dB}} = 128.1 + 37.6\log(d)$. Subsequently, the rate is given by:
\begin{equation}
\begin{aligned}\label{eq:scSINR}
C_i(t) = W\log \bigg(1+ \frac{q_i(t)|{h}_i(t)|^2}{WN_0}\bigg).
\end{aligned}
\end{equation}
\subsection{Problem Formulation}
We first normalize $C_i(t)$ as well as the maximum tolerable access delay to a value within the range $[0,1]$. For the rate $C_i(t)$, we simply divide the achieved rate by the maximum rate $C_{\textrm{max}}$ that can be achieved by the node having the best channel to the BS. We fix $C_{\textrm{max}}$ for the entire period by using the knowledge of the set of all the MTDs that are registered in the network. Thus, we use a normalized rate $C_{i}^n(t) = C_i(t)/C_{\textrm{max}}$.

To normalize the maximum tolerable access delay, we use a mapping from maximum tolerable access delay to a number in $[0,1]$ using a function $f(d_i(t))$. To do this, we use Gompertz function \cite{Gompertz} with slight modifications, which is an asymmetric sigmoid function that is widely used in growth modeling. The rationale behind using this function is that it is possible to control the point at which the value of the function starts to decrease as well as the steepness of the curve. Gompertz function \cite{Gompertz} is given by $g(t) = ae^{-be^{-ct}},$ where parameter $a$ defines the asymptote of the function, $b$ sets the displacement along the time axis, and $c$ determines the growth rate or the steepness of the function. The Gompertz function is an increasing function in time. Moreover, since smaller values of the maximum tolerable access delay mean that the MTD has delay-sensitive data to transmit, and hence, it should have a higher value in the utility function, we modify the Gompertz function to create a new function that is decreasing with time, as follows:
\begin{equation}\label{modifiedgompetz}
f(d_i(t)) =  a - ae^{-be^{-cd_i(t)}}.
\end{equation}

Fig. \ref{gompetzfigure} shows the plot of the modified Gompertz function for some different values of the control parameters. Any scheduling algorithm performs better in terms of delay if it selects MTDs with smaller maximum tolerable access delay, which is the one that maximizes function $f(d_i(t))$.
\begin{figure}
	\centering
	\includegraphics[width=10cm]{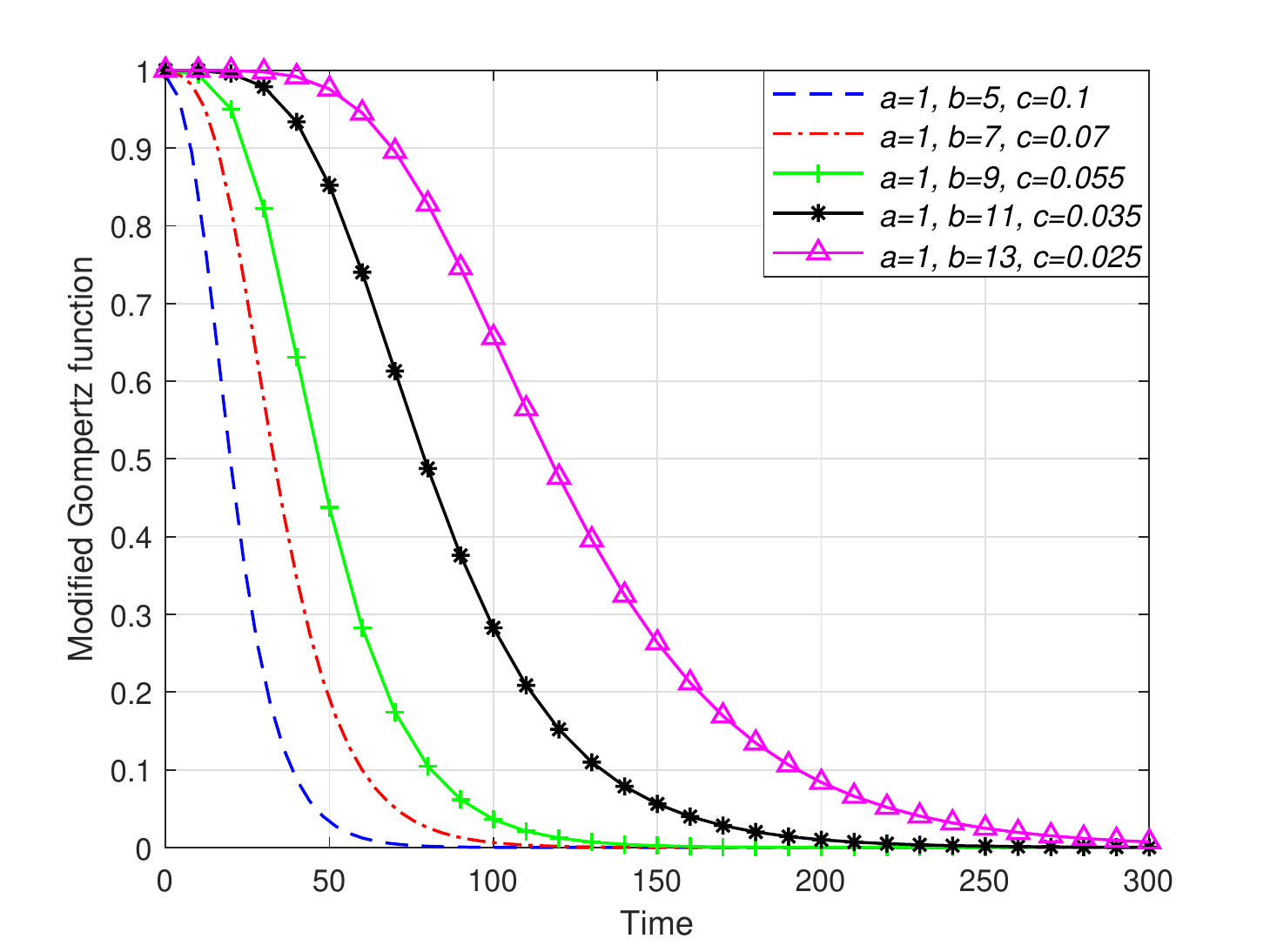}
	\caption{\small Modified Gompertz function for modeling latency for different values of the control parameters.}\label{gompetzfigure}
\end{figure}
For each MTD $i \in \mathcal{M}$, we can now define a utility function that combines all of the QoS metrics:
\begin{equation}\label{utilityfunction}
U_i(t) = \alpha v_i(t)  + \beta C_{i}^n(t)  + \gamma f(d_i(t)).
\end{equation}
In (\ref{utilityfunction}), $\alpha$, $\beta$, and $\gamma$ are weight parameters used to modify the importance of each metric with $\alpha+\beta+\gamma = 1$ . The best performance at time $t$ is achieved if an MTD $k \in\mathcal{K}\subseteq\mathcal{M}$ is selected such that:
\begin{equation}\label{optproblem}
\begin{aligned}
k=&\underset{i\in\mathcal{K}}{\text{argmax}}
& & U_i(t),\\
& \text{s.t.}
& & C_i(t) \geq \rho, \\
& & & d_i \geq t-t_s,
\end{aligned}
\end{equation}
where $\mathcal{K}$ is the set of active MTDs and $\rho$ is rate threshold required for data transmission. If $v_i(t)$, $h_i(t)$, $d_i(t)$, and the set of active MTDs are available to the BS, solving (\ref{optproblem}) is straightforward. However, in real-world networks, having such information at the BS is impractical due to the following reasons. First, MTDs should send a scheduling request to the BS using periodically available random access slots. Sending scheduling requests in MTC is not optimal since it: a) will most likely fail in massive access scenario, b) requires large signaling overhead compared to the small data packet size, and c) increases the latency. This motivates the development of a predictive resource allocation scheme, where the set of active MTDs is predicted at the BS. Second, for optimal performance in  the system, the BS must know the channel state information (CSI) of the MTDs, their data values, and their exact latency requirements. Clearly, in practical MTD networks, the BS does not have full knowledge on the parameters of the metric defined in (\ref{utilityfunction}). For example, since the data packets are small, having instantaneous CSI at the BS requires signaling overhead that is almost equal to the data size, which is naturally inefficient. Moreover, as discussed earlier, the tolerable access delay and value of the data packets can be different each time. Therefore, it is appropriate to solve problem (\ref{optproblem}) using online learning methods with limited or no information \cite{sutton1998reinforcement} at the BS. In this case the learning algorithm can learn the statistical properties of the CSI, the tolerable access delay, and the value of the data packets over time. Next, we propose a novel online algorithm based on MAB theory \cite{sutton1998reinforcement} to solve (\ref{optproblem}).

\section{Proposed Multi-Armed Bandit Framework and Algorithm}\label{MABsection}
\subsection{MAB theory and MAB problem formulation}
In a multi-armed bandit problem, a player (decision maker), pulls an arm from a set of available arms (selects an action from a set of available actions). Each arm generates a reward after being played, based on a distribution that is not known to the decision maker -- the decision maker only observes the reward of the selected arm. The aim of the player is to maximize a cumulative reward or minimize a cumulative regret. Regret is defined as the difference between the reward of the best possible arm at each game instant, and the generated reward of the arm that is played.

Let $\theta_{k}(t)$ be the reward of playing arm $k$ from the set of arms $\mathcal{K}$ at time $t$, and $\theta^*(t)= \underset{i\in \mathcal{K}}{\text{max}} \theta_{i}(t)$ to be the highest possible reward that could be achieved at time $t$ from the set of all arms $i\in \mathcal{K}$. The regret up to time $T$ is defined as \cite{sutton1998reinforcement}:
\begin{equation}\label{regret01}
R(T) = \mathbb{E}\bigg [ \sum_{t=1}^{T}\theta^*(t) - \sum_{t=1}^{T}\theta_{k}(t) \bigg ],
\end{equation}
where the expectation is taken over the random choices of the algorithm as well as the randomness in reward allocation. In our problem, each MTD is seen as an arm in the MAB settings and the BS is the player that selects the best arm at each time and after playing that arm, receives a reward that is generated by the metric defined in (\ref{utilityfunction}). Hence, the reward that is generated by each MTD $i \in \mathcal{M}$ is:
\begin{equation}\label{reward01}
\theta_i(t) = \mathds{1}\big[d_i > t-t_s\big] \mathds{1}\big[C_i(t)> \rho \big] U_i(t),
\end{equation}
where $\mathds{1}(.)$ is an indicator function that is equal to $1$ when the argument of the function holds and $0$ otherwise. Indicator functions are used to show that the reward of the algorithm at time step $t$ for selecting MTD $i$ is $0$ under the following conditions:
\begin{itemize}
	\item $C_i(t) < \rho $, i.e, the achieved throughput falls below the defined threshold and the packet cannot be transmitted successfully. This often happens when the channel quality between MTD $i$ and the BS is below a certain level.
	\item $d_i(t) < t_i - t_s$. Here, $t_i$ is the time that MTD $i$ is selected for transmission and $t_s$ is the time when MTD $i$ had a packet ready for transmission. Hence, $t-t_s$ is the number of time steps that MTD $i$ has waited to receive the fast uplink grant. Naturally, if $d_i(t) < t_i - t_s$, then the MTD packets will be dropped and the reward at the BS for selecting MTD $i$ will be $0$.
\end{itemize}

The goal of the BS is to maximize its cumulative reward over time. To solve such a problem, the natural solution is to find the best possible arm and play it all the time. This requires playing all of the available arms for many times to find their expected value. However, randomly selecting arms in the process of learning is highly suboptimal. Hence, an MAB algorithm finds the arms with higher rewards and chooses them more often, which is known as \emph{exploitation} of those arms. At the same time, an MAB algorithm should \emph{explore} all the other arms enough times to find their expected value more precisely. This is known as the exploration versus exploitation tradeoff. Several methods exist to solve the problem of exploration/exploitation. One of the most popular solution approaches for the MAB problem is based on the concept of upper-confidence bound (UCB). In this method, the MAB algorithm at each time $t$ plays an arm $x(t)$ such that:
\begin{equation}\label{ucb}
x(t) = \arg\max_{i \in \mathcal{K}} \frac{z_i(t)}{n_i(t)} + \sqrt{\frac{\psi \ln t}{n_i(t)}}
\end{equation}
where $t$ is the time step, $n_i(t)$ is the number of the times that arm $i$ was played in the previous time steps up to $t-1$, $z_i(t)$ is the sum of the rewards of playing arm $i$ up to time $t$, and $\psi$ is a parameter that provides a tradeoff between exploration and exploitation. Larger values of $\psi$ lead to a higher amount of exploration. We will next use the UCB concept in our proposed probabilistic sleeping MAB algorithm to provide a tradeoff between exploration and exploitation. In the UCB method, an interval is defined around the average of the received rewards from each arm. This confidence interval $\sqrt{\frac{\psi \ln t}{n_i(t)}}$ depends on the number of the times that an arm was played and the total number of the times that algorithm is running. The more one arm is played, the UCB value becomes smaller. This means that the empirical mean is closer to the real expected value of the arm. When using the concept of confidence intervals in MABs, the concept of \emph{optimism in the face of uncertainty} is used. This concept favors selecting an arm that has higher UCB value. 
\subsection{Sleeping Bandits and Proposed Algorithm}
In classical MAB problems, it is assumed that all of the arms are available to be played at all time instants. However, for the MTC fast uplink grant scheduling problem, this assumption is not valid since MTDs will have a small number of packets and usually, after each transmission, they become idle for some time. Hence, we consider a scenario in which, the set of available arms varies over time. This type of problems are called \emph{sleeping MAB} problems \cite{kleinberg2010regret}. In our problem, since the availability of the MTDs follows the distribution of their traffic, and the reward can be described by (\ref{reward01}), we have \emph{sleeping bandits with stochastic action availability and stochastic rewards}. The authors in \cite{kleinberg2010regret} provide an algorithm named AUER that addresses such problems and achieves optimal regret. However, AUER is only applicable to sleeping MAB problems in which the set of available arms is perfectly known to the decision maker in advance. In our problem formulation, such an assumption will not hold. Therefore, we propose a novel solution, summarized in Algorithm \ref{alg:probabilisticAlgoirthm}. Here, we consider that the BS has a prediction algorithm (e.g., such as those proposed in \cite{MTCsourceTraffic}, \cite{DI_letter}, and \cite{MingzheSurvey}) to determine the set of active MTDs at each given time. This algorithm provides the set of active MTDs with a certain probability. That is, each MTD $i$ has a probability $P_i(t)$ of being active at time $t$. In this problem, since the availability of the MTDs is probabilistic, the selected MTD might not be active, which will lead to $0$ reward and a waste of resources. Therefore, to solve the optimization problem in (\ref{optproblem}) we propose an MAB algorithm that takes such a probability of being active into account. In this algorithm, the BS at each time selects an MTD $x(t)$ such that:
\begin{equation}\label{maximization}
x(t)  = \arg \max_{i\in \mathcal{K}_{t}}  P_i(t) \left( \frac{z_i(t)}{n^'_i(t)} + \sqrt{\frac{\psi\log t^'}{n^'_i(t)}}\right)
\end{equation}
where $z_i(t)$ is the sum of rewards of MTD $i$, $n^'_i(t)$ is the number of the times that MTD $i$ was selected and was active, and $t^'$ is the total number of the times that the selected MTD was active. $\mathcal{K}_{t}$ is defined as the set of active MTDs at time $t$. In contrast to the original UCB method, we only count the number of times that the selected MTD was active. This ensures that the statistical average and the UCB values are calculated correctly. Since the availability of the MTDs in set $\mathcal{K}_t$ have associated probabilities, the error of the prediction at the BS will propagate to the MAB. This means that the performance of the sleeping MAB will suffer since some selected MTDs for the fast uplink grant might not be active. Less error in the prediction algorithm will lead to a better performance of the probabilistic sleeping MAB. This algorithm will select MTDs with higher values of the utility function and higher probability of being active while balancing the tradeoff between exploration and exploitation.
\begin{figure}[t]
		\begin{algorithm}[H]\small
			\caption{The Probabilistic Sleeping MAB Algorithm.}
			\begin{algorithmic}
				\State Initialize $z_i$, $n_i$ for all $i\in [n]$, initialize $t^'$
				\State \textbf{for} $t=1$ \emph{to} $T$ \textbf{do}
				\State \quad \textbf{if} $\exists j \in \mathcal{K}_t$ s.t. $n_j=0$ \textbf{then}
				\State \quad \quad Play arm $x(t)=j$
				\State \quad \textbf{else}
				\State \quad \quad Play arm $x(t)  = \arg \max_{i\in \mathcal{K}_{t}} P_i(t) \left( \frac{z_i(t)}{n^'_i(t)} + \sqrt{\frac{\psi\log t^'}{n^'_i(t)}}\right)$
				\State \quad \textbf{end}
				\State \quad \textbf{if} $x(t)$ is an available arm $(x(t) \neq 0)$ \textbf{then}		
				\State \quad observe payoff $\theta_{x(t)}$
				\State \quad $z_{x(t)} \leftarrow z_{x(t)} + \theta_{x(t)}(t)$
				\State \quad $n^'_{x(t)} \leftarrow n^'_{x(t)} + 1$
				\State \quad $t^' \leftarrow t^' + 1$
				\State \quad \textbf{else}
				\State \quad $z_{x(t)} \leftarrow z_{x(t)}$
				\State \quad $n^'_{x(t)} \leftarrow n^'_{x(t)}$
				\State \quad $t^' \leftarrow t^'$
				\State \textbf{end}
			\end{algorithmic}
			\label{alg:probabilisticAlgoirthm}
		\end{algorithm}
	\par
\end{figure}
\subsection{Prediction error}
Any error in the source traffic prediction algorithm that provides the set of active MTDs will affect the performance of the proposed sleeping MAB algorithm. Here, we first define the prediction error which will later be used in our analysis in Section \ref{regretAnalysis}. Two prediction errors:
\begin{enumerate}
	\item For any MTD that is active at time $t$ for which the source traffic prediction algorithm assigns a probability of being available $P_i(t)$, the prediction error will be $1-P_i(t)$. If an optimal MTD is active and has high prediction error, that MTD might not be scheduled and some sub-optimal MTD $j$ will be scheduled instead, which will lead to regret $\mu_{i}(t) - \mu_{j}(t)$. We use $e_1$ to capture this event.
	\item For any non-active MTD $j$ that is in the set $\mathcal{K}_t$, the prediction error is $P_j(t)$. If any non-active MTD $j$ is improperly selected due to high prediction error instead of an optimal MTD $i$, then the returned reward is zero, and, hence, regret is $\mu_{i}(t)$. This is the highest amount of regret that can happen at any given time. We denote event this by $e_2$.
\end{enumerate}
\subsection{Regret Analysis of the Proposed Algorithm}\label{regretAnalysis}
Next, we provide the analytical regret analysis of the proposed probabilistic sleeping MAB. We derive the upper bound of the regret and derive the relation between the accuracy of the source traffic prediction method and the regret of our proposed algorithm. Throughout this section, we use the following setup. Consider a MAB scenario with $n$ arms, where $ \mu_1>\mu_2> ... > \mu_n $, with $\mu_i$ being the expected value of the rewards of arm $i$. We define $ N_{i,j} $ as the number of times arm $ j $ was played while some arm in set $\mathcal{I}=\{1,\dots,i\} , (i<j) $ could have been played. We define $ \Delta_{i,j}=\mu_i-\mu_j $, which is always positive. The expected value of the regret can be expressed as:
\begin{equation}\label{eq:TheoremRegret}
\begin{aligned}
R(T)=&\mathbb{E}\left[\sum_{j=2}^{n}\sum_{i=1}^{j-1}\left(N_{i,j}-N_{i-1,j}\right)\Delta_{i,j}\right] + \mathbb{E}\left[\mu_{1}(t)\right]f(e_2)T\\
=&\sum_{j=2}^{n}\sum_{i=1}^{j-1}N_{i,j}(\Delta_{i,j} - \Delta_{i+1,j}) + \mathbb{E}\left[\mu_{1}(t)\right]f(e_2)T.
\end{aligned}
\end{equation}
$N_{0,j} = 0$ and $\Delta_{j,j} = 0$ \cite{kleinberg2010regret}. In the following, $n_{i}(t)$ is the number of times that arm $i$ is played until time $t$, $n^'_{i}(t)$ is the number of the times that arm $j$ was played and it was available, and $t^'$ is the number of the times that the played arm was available, $t$ is the time step in the algorithm and $T$ is the total time that the algorithm has been running. Moreover, in the following, $\hat{\mu}_{k}(t)$ shows the average received reward of arm $k$ up to time $t$. Next, we derive the number of times that prediction error event $e_2$ happens with function $f(e_2)$.
\begin{lemma}\label{Lemma1}
Given the definitions of $\mu_{k}$, $\hat{\mu}_{k}(t)$, and $n_{k}(t)$, the following holds:
\begin{equation}\label{eq:Lemma1}
\begin{aligned}
\mathbb{P} \left[\hat{\mu}_{k}(t) - \sqrt{\frac{\psi \ln t^'}{n^'_{k}(t)}}\leq \mu_{k} \leq \hat{\mu}_{k}(t) + \sqrt{\frac{\psi \ln t^'}{n^'_{k}(t)}}\right] = &\\ 
\mathbb{P} \left[\mu_{k} - \sqrt{\frac{\psi \ln t^'}{n^'_{k}(t)}} \leq \hat{\mu}_{k}(t) \leq\mu_{k} + \sqrt{\frac{\psi \ln t^'}{n^'_{k}(t)}}\right] &\geq 1 - \frac{2}{t^{2\psi}}
\end{aligned}
\end{equation}
\end{lemma}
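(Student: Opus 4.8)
The plan is to read \eqref{eq:Lemma1} as a two-sided Hoeffding concentration bound for the empirical mean of arm $k$. First I would dispose of the equivalence of the two probabilities appearing in the statement: for $c=\sqrt{\psi\ln t'/n'_k(t)}$, the event $\{\hat\mu_k(t)-c\le\mu_k\le\hat\mu_k(t)+c\}$ and the event $\{\mu_k-c\le\hat\mu_k(t)\le\mu_k+c\}$ are both literally the event $\{|\hat\mu_k(t)-\mu_k|\le c\}$, so it suffices to prove $\mathbb{P}\!\left[\,|\hat\mu_k(t)-\mu_k|>c\,\right]\le 2/t^{2\psi}$.

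Next I would establish the two ingredients Hoeffding needs. Boundedness: by (\ref{utilityfunction}) the utility $U_i(t)=\alpha v_i(t)+\beta C_i^n(t)+\gamma f(d_i(t))$ is a convex combination ($\alpha+\beta+\gamma=1$) of quantities in $[0,1]$, hence $U_i(t)\in[0,1]$; the two indicators in (\ref{reward01}) only shrink it, so every realized reward $\theta_k(t)$ lies in $[0,1]$. Stationarity/independence: since the small-scale fading $g_i(\cdot)$ is i.i.d. across time and $v_i(\cdot),d_i(\cdot)$ are drawn from fixed distributions, the successive rewards collected from arm $k$ form an i.i.d.\ sequence with common mean $\mu_k$ (formally, couple them to an exogenous "reward tape'' $\{X_{k,s}\}_{s\ge1}$ independent of the algorithm's choices). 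Conditioning on $n'_k(t)=m$, the one-sided Hoeffding inequality gives $\mathbb{P}[\hat\mu_k(t)-\mu_k>c]\le \exp(-2mc^2)$, and substituting $c=\sqrt{\psi\ln t'/m}$ makes the exponent $-2\psi\ln t'$, i.e.\ the tail is $(t')^{-2\psi}$. The symmetric lower-tail bound is identical, and a union bound over the two tails yields $\mathbb{P}[|\hat\mu_k(t)-\mu_k|>c]\le 2(t')^{-2\psi}$, which is the claimed estimate.

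The only genuinely delicate point is the bookkeeping around the random sample count $n'_k(t)$: Hoeffding is stated for a deterministic number of summands, whereas here both the number of observed rewards $n'_k(t)$ and the confidence radius $c$ that depends on it are random. The clean way to handle this — and the reading implicit in the lemma — is to prove the bound for each fixed $m$ and then instantiate it at $m=n'_k(t)$; if instead one wants a statement that is uniform over the realized count, one takes an extra union bound over $m\in\{1,\dots,t\}$ (a peeling step), which only costs a polynomial factor absorbed for $\psi$ large enough. The same care is what reconciles the mild slack between the $\ln t'$ (available-play count) inside the radius and the $t^{2\psi}$ (round count) written on the right-hand side, using $t'\le t$. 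Apart from this, the argument is a direct application of the Chernoff--Hoeffding bound to a bounded, stationary reward stream, so I do not anticipate any substantive obstacle beyond stating the coupling and the conditioning carefully.
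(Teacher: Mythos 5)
Your proposal follows essentially the same route as the paper's own proof: the paper also treats the two events as one (the event $\lvert\hat{\mu}_{k}(t)-\mu_{k}\rvert\le\sqrt{\psi\ln t'/n'_{k}(t)}$), applies the Chernoff--Hoeffding inequality to the $[0,1]$-bounded rewards with that radius, and then concludes ``after simplifications.'' Your additional bookkeeping for the random sample count $n'_{k}(t)$ (conditioning on $n'_{k}(t)=m$, or a union bound over $m$) is more careful than the paper, which implicitly treats $n'_{k}(t)$ as fixed. One caveat on your final reconciliation step: invoking $t'\le t$ goes in the wrong direction, since $t'\le t$ gives $2/(t')^{2\psi}\ge 2/t^{2\psi}$, so the Hoeffding tail expressed in $t'$ does not by itself yield the stated bound $1-2/t^{2\psi}$; this slack is equally unaddressed in the paper's ``after simplifications,'' and the clean fixes are either to use $\ln t$ inside the confidence radius or to state the lemma's right-hand side as $1-2/(t')^{2\psi}$ (or absorb the gap via $t'\approx P_{\textrm{av}}t$ at the cost of a constant factor).
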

\begin{proof}
We start from Chernoff-Hoeffding inequality where $\hat{\mu_{k}}(t)$ are strictly bounded by the intervals $[0, 1]$ and considering the confidence bound $\sqrt{\frac{\psi \ln t^'}{n^'_{k}(t)}}$, the inequality can be given by
\begin{equation}
\mathbb{P}\left[|\hat{\mu}_{k}(t) - \mu_{k}| \geq \sqrt{\frac{\psi \ln t^'}{n^'_{k}(t)}}\right] \leq 2 \exp\bigg(-n^'_{k}(t)\bigg(\sqrt{\frac{\psi \ln t^'}{n^'_{k}(t)}}\bigg)^2\bigg).
\end{equation}
After simplifications, we prove the lemma.
\end{proof}

This lemma is used in Theorem \ref{theorem1} where we analyze the regret bounds of the proposed probabilistic sleeping MAB solution presented in Algorithm \ref{alg:probabilisticAlgoirthm}. In our proposed MAB algorithm, a suboptimal arm is selected instead of the optimal arm in the following cases: a) The MAB algorithm does not have an accurate estimate of the rewards of each arm. This mostly happens during the initial learning phase, b) A suboptimal arm is selected because of prediction error $e_1$, or c) Zero reward is returned due to prediction error $e_2$. Clearly, cases a) and b) for the regret are a function of the accuracy of the prediction algorithm. We decouple the effect of the prediction errors of the source prediction algorithm from the uncertainty of the MAB algorithm about the expected values of the rewards of each MTD. We show that prediction errors can lead to linear regret with respect to the total running time of the algorithm with a coefficient that is a function of the prediction error. However, such a coefficient becomes very small for a source traffic prediction algorithm with high accuracy, and therefore, make the linear term very small.
\begin{theorem}\label{theorem1}
The regret of the probabilistic sleeping MAB algorithm is at most:
\begin{equation}\label{eq:The01}
\begin{aligned}
R(T)=&\bigg(4\psi \ln TP_{\textrm{av}}\!+\!\mathcal{O}(1)\!+\!f(e_1)T\bigg) \sum_{j=2}^{n}\sum_{i=1}^{j-1}\bigg(\frac{1}{\big(P_i\mu_{i}\!-\!P_j\mu_{j}\big)^2}\bigg)\bigg(P_i\mu_{i}\!-\!P_{i+1}\mu_{i+1}\bigg)^2\!+\!\mathbb{E}\left[\mu_{1}(t)\right]f(e_2)T\\
=&\bigg(8\psi \ln TP_{\textrm{av}} + \mathcal{O}(1) + f(e_1)T\bigg) \sum_{j=1}^{n-1}\bigg(\frac{1}{\big(P_{j+1}\mu_{j+1}-P_j\mu_{j}\big)^2}\bigg) + \mathbb{E}\left[\mu_{1}(t)\right]f(e_2)T.
\end{aligned}
\end{equation}
where $P_{\textrm{av}}$ is the average activity probability of the source traffic prediction.
\end{theorem}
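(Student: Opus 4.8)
The starting point is the regret decomposition in (\ref{eq:TheoremRegret}), which already isolates the irreducible contribution $\mathbb{E}[\mu_{1}(t)]f(e_2)T$ of error event $e_2$ and expresses the remainder through the counters $N_{i,j}$, the number of rounds in which arm $j$ is played while some arm in $\mathcal{I}=\{1,\dots,i\}$ was available. The whole argument therefore reduces to an upper bound on $\mathbb{E}[N_{i,j}]$, which I obtain by the optimism-in-the-face-of-uncertainty argument applied to the modified index $P_i(t)\big(\hat{\mu}_i(t)+\sqrt{\psi\ln t'/n'_i(t)}\big)$ used in (\ref{maximization}).

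Fix a pair $i<j$. In a round $t$ with arm $i\in\mathcal{K}_t$ in which arm $j$ is nevertheless chosen, one of three things has happened: (a) error event $e_1$ occurred, i.e.\ a better available arm was assigned too low an activity probability and skipped; (b) the empirical means had not yet concentrated; or (c) arm $j$ has simply not been sampled enough. Case (a) is charged to the prediction-error budget: over the horizon it contributes at most $f(e_1)T$ rounds in total, which is where the additive $f(e_1)T$ term originates. If (a) does not occur, the selection rule (\ref{maximization}) forces $P_j\big(\hat{\mu}_j(t)+\sqrt{\psi\ln t'/n'_j(t)}\big)\ge P_i\big(\hat{\mu}_i(t)+\sqrt{\psi\ln t'/n'_i(t)}\big)$, and by Lemma \ref{Lemma1}, outside an event of probability at most $2/t^{2\psi}$ we may substitute $\hat{\mu}_i(t)\ge\mu_i-\sqrt{\psi\ln t'/n'_i(t)}$ and $\hat{\mu}_j(t)\le\mu_j+\sqrt{\psi\ln t'/n'_j(t)}$. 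This yields $P_j\mu_j+2P_j\sqrt{\psi\ln t'/n'_j(t)}\ge P_i\mu_i$, hence $n'_j(t)\le 4\psi P_j^{2}\ln t'/(P_i\mu_i-P_j\mu_j)^2$. Bounding $P_j\le 1$ and summing the exceptional probabilities $2/t^{2\psi}$ over $t$ (a convergent series for $\psi>1/2$) produces the constant $\mathcal{O}(1)$, while translating the effective counters $n'_j$ and the effective time $t'$ back to true counts and to the horizon $T$ introduces the average activity probability, giving $n'_j(t)\le\big(4\psi\ln TP_{\textrm{av}}+\mathcal{O}(1)\big)/(P_i\mu_i-P_j\mu_j)^2$. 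Combining with case (a),
\begin{equation*}
\mathbb{E}[N_{i,j}]\le\frac{4\psi\ln TP_{\textrm{av}}+\mathcal{O}(1)+f(e_1)T}{(P_i\mu_i-P_j\mu_j)^2}.
\end{equation*}

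Substituting this into the second line of (\ref{eq:TheoremRegret}), with the gaps now read in terms of the effective rewards so that $\Delta_{i,j}=P_i\mu_i-P_j\mu_j$ and $\Delta_{i,j}-\Delta_{i+1,j}=P_i\mu_i-P_{i+1}\mu_{i+1}$, yields the first line of (\ref{eq:The01}). The second line then follows from the purely algebraic telescoping estimate of \cite{kleinberg2010regret}: one shows that the double sum $\sum_{j=2}^{n}\sum_{i=1}^{j-1}(P_i\mu_i-P_{i+1}\mu_{i+1})^2/(P_i\mu_i-P_j\mu_j)^2$ is dominated by $2\sum_{j=1}^{n-1}1/(P_{j+1}\mu_{j+1}-P_j\mu_j)^2$, which absorbs a factor $2$ into the leading coefficient (turning $4$ into $8$).

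The step I expect to be the main obstacle is the clean decoupling of the three causes of suboptimal pulls together with the associated bookkeeping: one must argue that conditioning on the realized availability sets $\mathcal{K}_t$ does not invalidate the Hoeffding concentration of Lemma \ref{Lemma1} (since $n'_i(t)$ counts only rounds in which the arm was both selected and active), and that the $e_1$-rounds may legitimately be charged once per round to the budget $f(e_1)T$ rather than once per pair $(i,j)$. Getting the effective-time quantities $t'$ and $n'_i$ to collapse into the factor $P_{\textrm{av}}$ and the term $\ln TP_{\textrm{av}}$, uniformly in $i$, is the other delicate point, and it is what links the accuracy of the source-traffic predictor to the size of the logarithmic regret term.
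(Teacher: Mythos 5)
Your proposal is correct and follows essentially the same route as the paper's proof: the same decomposition of the regret through the counters $N_{i,j}$, the same optimism argument on the index $P_i(t)\big(\hat{\mu}_i(t)+\sqrt{\psi\ln t'/n'_i(t)}\big)$ combined with Lemma \ref{Lemma1} to obtain the $4\psi\ln(TP_{\textrm{av}})/(P_i\mu_i-P_j\mu_j)^2$ sample bound plus an $\mathcal{O}(1)$ tail, the same per-pair charging of the $e_1$-rounds as $f(e_1)T$, and the same conversion $t'=tP_{\textrm{av}}$, $n'_j=n_jP_j$ followed by the telescoping step that turns $4$ into $8$. The only cosmetic difference is that you collapse the paper's explicit case events ($A_1$, $A_{2,1}$--$A_{2,4}$, $A_{3,1}$) into a single direct derivation of the count bound, which changes nothing of substance.
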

\begin{proof}
To derive the regret bounds for our algorithm, we need to bound the regret arm by arm. We need to find the expected value of the number of the times that each arm was played, when that arm was suboptimal. That is, what is the expected number of times $N_{i,j}$ that arm $j$ was played while some other arm $i \in \mathcal{I},\ i \neq j$ could have been played. Assume that arm $j$ was already played $Q_{i,j}$ times while some other arm $i \in \mathcal{I}$ was available. The expected number of times that arm $j$ was played can be written as:
\begin{equation}\label{eq:proofstep1}
\begin{aligned}
N_{i,j} = &\sum_{t=Q_{i,j}+1}^{T} \sum_{s=Q_{i,j}+1}^{t} \textbf{}{\mathbb{P}}[(x_t=j) \wedge (j \text{ is played } s \text{ times)} \wedge (\mathcal{I} \neq \emptyset)]\\
&\leq \sum_{t=Q_{i,j}+1}^{T} \sum_{s=Q_{i,j}+1}^{t} \mathbb{P}\bigg[(x_t=j) \wedge (n^'_{j}(t) = s) \\& \wedge \bigg( \forall
_{k=1}^i\bigg(P_k(t)\hat{\mu}_{k}(t) + \sqrt{\frac{\psi \ln t^'}{n^'_{k}(t)}}\bigg) \leq \bigg(P_j(t)\hat{\mu}_{j}(t) + \sqrt{\frac{\psi \ln t^'}{s}}\bigg)  \bigg)\bigg] \\
&\leq \sum_{t=Q_{i,j}+1}^{T} \sum_{s=Q_{i,j}+1}^{t} \mathbb{P}\left[ \forall
_{k=1}^i\left(P_k(t)\left(\hat{\mu}_{k}(t) + \sqrt{\frac{\psi \ln t^'}{n^'_{k}(t)}}\right)\right) \leq P_j(t)\left(\hat{\mu}_{j}(t) + \sqrt{\frac{\psi \ln t^'}{s}}\right)\right]
\end{aligned}
\end{equation}
To analyze this, we define two events $E_1$ and $E_2$ as follows:
\begin{equation}
E_1 := \left[ \forall
_{k=1}^i\left(P_k(t)\left(\hat{\mu}_{k}(t) + \sqrt{\frac{\psi \ln t^'}{n^'_{k}(t)}}\right)\right) \leq P_j(t)\left(\hat{\mu}_{j}(t) + \sqrt{\frac{\psi \ln t^'}{n^'_{j}(t)}}\right)\right],
\end{equation}
and, for all $k\in\{j\}\cup \mathcal{I}$:
\begin{align}
E_2 := \hat{\mu}_{k}(t) \in \bigg[ \mu_{k} - \sqrt{\frac{\psi \ln t}{n_{k}(t)}},  \mu_{k} + \sqrt{\frac{\psi \ln t}{n_{k}(t)}} \bigg].
\end{align}
$E_2$ means that average received reward for each arm is not further away than the real value of expected value of the reward of each arm, within a margin of the UCB value. We have defined $E_2$ since it will help us in evaluating the accuracy of our estimation of the reward for each arm. After conditioning $E1$ on $E2$, we have:
\begin{align}
P[E_1] = &P[E_1|E_2]P[E_2] + P[E_1|E^c_2]P[E^c_2]\\
&\leq P[E_1|E_2] + P[E^c_2].
\end{align}
From Lemma \ref{Lemma1}, the probability of occurrence of $E_2$ for each arm is $1-1/t^{2\psi}$, and, thus, for all $k\in\{j\}\cup \mathcal{I}$ we have:
\begin{align}
P[E^c_2] = \frac{2(i+1)}{t^{2\psi}}.
\end{align}

No we can evaluate $E_1$ after conditioning on $E_2$. Event $E_1$ will happen if at least one of the following conditions hold \cite{auer2002finite}:

I) We are grossly overestimating the value of arm $j$:
\begin{align}
A_1 &:= P_j(t)\hat{\mu}_{j}(t) > \mu_j + \sqrt{\frac{\psi \ln t^'}{n^'_{j}(t)}}.
\end{align}
By carefully evaluating events $E_1$ and $E_2$, we can observe that $P[A_1|E_2] = 0$. Note that this overestimation is evaluated considering the worst case scenario with $P_j(t) = 1$.

II) We are grossly underestimating the values of all of the arms in $\mathcal{I}$, which can be captured by the following event:
\begin{align}
A_{2,1} := \forall
_{k=1}^i \bigg(P_k(t)\hat{\mu}_{k}(t) < \mu_{k} - \sqrt{\frac{\psi \ln t^'}{n^'_{k}(t)}} \bigg).
\end{align}
For $P_k(t) \simeq 1 $, this term never holds when conditioned on $E_2$, i.e, $P[A_{2,1}| E_2, P_k(t) \simeq 1] = 0$. However, for $P_k < 1 $, arm $k$ will be grossly underestimated under the following condition:
\begin{align}
A_{2,2} :=	\forall
_{k=1}^i \bigg (P_k(t) < \frac{\mu_{k} - \sqrt{\frac{\psi \ln t^'}{n^'_{k}(t)}}}{\hat{\mu}_{k}(t)} \bigg ).
\end{align}
This means that, for all arms in $\mathcal{I}$, the probability of being active (while the arm is actually active) so low that the probabilistic UCB value is lower than the real expected value of the arm. However, $A_{2,2}$ is not sufficient for incurring regret and another condition must hold for arm $j$: the probability of being active must be high enough such that its probabilistic UCB value is within the confidence interval around the real expected value, i.e., we must have:
\begin{align}
P_j(t)\hat{\mu}_{j}(t) > \mu_{j} - \sqrt{\frac{\psi \ln t^'}{n^'_{j}(t)}},
\end{align}
which leads to:
\begin{align}\label{ppppp_j1}
A_{2,3} :=	
P_j(t) > \frac{\mu_{j} - \sqrt{\frac{\psi \ln t^'}{n^'_{j}(t)}}}{\hat{\mu}_{j}(t)}.
\end{align}
Since for selecting the suboptimal arm $j$ both $A_{2,2}$ and $A_{2,3}$ must hold, we define the event:
\begin{align}
A_{2,4} = A_{2,2} \wedge A_{2,3},
\end{align}
and, thus, if $A_{2,4}$ occurs, a suboptimal arm $j$ might be played which lead to increase in the accumulated regret. We should state that $A_{2,4}$ is independent of $E_2$. 

III) The expected value of the arms $j$ and $k$ are nearly equal. When the expected values of two arms are close to each other, following two conditions will lead to choosing a suboptimal arm: a) whenever the confidence interval of the suboptimal arm is large and, hence, the suboptimal arm has higher UCB value compared to the optimal arm, or b) When the UCB value of the optimal arm is larger than the suboptimal, but the optimal arm has lower probability, and, therefore the suboptimal arm is selected. These two conditions can be expressed by:
\begin{equation}\label{thirdReason}
A_{3,1} := P_j\mu_{j} + 2 \sqrt{\frac{\psi \ln t^'}{n^'_{j}(t)}} > P_k\mu_{k}.
\end{equation}
After rearranging (\ref{thirdReason}), to choose the optimal arm, the following condition is needed for the confidence interval:
\begin{equation}\label{condtionOfThird}
\sqrt{\frac{\psi \ln t^'}{n^'_{j}(t)}} < \frac{P_k\mu_{k}-{P_j}\mu_{j}}{2}.
\end{equation}
Now, in order for the condition in (\ref{condtionOfThird}) to hold, we must play arm $j$ enough times to have an exact estimate of its value:
\begin{align}\label{q1}
Q^'_{i,j}> \bigg \lceil \frac{4\psi\ln T^'}{\big(P_k\mu_{k}-{P_j}\mu_{j}\big)^2} \bigg \rceil.
\end{align}
This means that, conditioned on $E_2$, after playing arm $j$ for $Q^'_{i,j}$ times, $A_{3,1}$ will never happen since the confidence intervals are small enough. Therefore we have $P[A_{3,1}|E_2] = 0$.

Now, we can write (\ref{eq:proofstep1}) as:
\begin{equation}
\begin{aligned}
N_{i,j} & \leq \sum_{t=Q_{i,j}+1}^{T} \sum_{s=Q_{i,j}+1}^{t}\bigg[P[A_1|E_2] +  P[A_{2,1}|E_2] + P[A_{2,4}|E_2] + P[A_{3,1}|E_2] + P[E^c_2]]\bigg].
\end{aligned}\label{eq:proofEvents}
\end{equation}

We have already seen that $P[A_1|E_2] = 0$, $P[A_{2,1}|E_2] = 0$, and $P[A_{3,1}|E_2] = 0$. Moreover, $P[A_{2,4}|E_2] = P[A_{2,4}]$ since the probability of an MTD being active is independent of the event $E_2$. As observed from Lemma \ref{Lemma1}, we have $P[E^c_2] = 2(i+1)/t^{2\psi}$, and since $E_3$ has occurred, (\ref{eq:proofEvents}) simplifies to:
\begin{align}
N_{i,j} &\leq Q^'_{i,j} + \sum_{t=Q_{i,j}+1}^{T} \sum_{s=Q_{i,j}+1}^{t}\bigg[ P[A_{2,4}]\bigg]  + \sum_{t=Q_{i,j}+1}^{T} \sum_{s=Q_{i,j}+1}^{t} \frac{2(i+1)}{t^{2\psi}}\\
&= \bigg \lceil \frac{4\psi\ln T^'}{\big(P_i\mu_{i}-{P_j}\mu_{j}\big)^2} \bigg \rceil +  \sum_{t=Q_{i,j}+1}^{T} \sum_{s=Q_{i,j}+1}^{t}\bigg[ P[A_{2,4}] \bigg] + \mathcal{O}(nT^{-\psi})\\
&= \bigg \lceil \frac{4\psi\ln T^'}{\big(P_i\mu_{i}-{P_j}\mu_{j}\big)^2} \bigg \rceil + \mathcal{O}(1) + \sum_{t=Q_{i,j}+1}^{T} \sum_{s=Q_{i,j}+1}^{t}\bigg[ P[A_{2,4}]\bigg] \bigg].
\end{align}
It is impossible to derive a closed-form expression for the number of times that event $A_{2,4}$ happens since the confidence interval and accuracy of the estimated average for each arm changes at each time. However, we can conclude that the number of times that $A_{2,4}$ happens is a linear function of time $T$ multiplied by a coefficient that is the function of the prediction error $f(e_1)$. Clearly, $f(e_1) \rightarrow 0$ as $e_1 \rightarrow 1$.
Therefore, we have:
\begin{equation}
N_{i,j}  = \bigg \lceil \frac{4\psi\ln T^'}{\big(P_i\mu_{i}-{P_j}\mu_{j}\big)^2} \bigg \rceil + \mathcal{O}(1) + f(e_1)T
\end{equation}

We can now exactly calculate $Q^'_{i,j}$. To this end, we need to count the total number of times that a given arm was selected that this arm and was available which is equal to the total number of times that we have selected an arm multiplied by probability of the availability of that arm, i.e., 
\begin{equation}
\begin{aligned}
\frac{n^'_j}{n_j}& =\mathbb{E}[P_j(t)] = P_j  \Rightarrow n^'_j = n_j P_j \quad \forall j \in \mathcal{A}, \\
\frac{t^'}{t} & =\mathbb{E}[P_j]  = P_{av} \Rightarrow t^' =  t P_{av}.
\end{aligned}
\end{equation}
Therefore, we can derive $N_{i,j}$ as:
\begin{align}
N_{i,j} & = \bigg \lceil \frac{4\psi\ln(T P_{av})}{ \big(P_i\mu_{i}-{P_j}\mu_{j}\big)^2} \bigg \rceil + \mathcal{O}(1) + f(e_1)T.
\end{align}
By plugging this in (\ref{eq:TheoremRegret}), we can conclude:
\begin{equation}\label{eq:TheFinal}
\begin{aligned}
R(T)=&\bigg(4\psi \ln TP_{\textrm{av}}\!+\!\mathcal{O}(1)\!+\!f(e_1)T\bigg) \sum_{j=2}^{n}\sum_{i=1}^{j-1}\bigg(\frac{1}{\big(P_i\mu_{i}\!-\!P_j\mu_{j}\big)^2}\bigg)\bigg(P_i\mu_{i}\!-\!P_{i+1}\mu_{i+1}\bigg)^2\!+\!\mathbb{E}\left[\mu_{1}(t)\right]f(e_2)T\\
=&\bigg(8\psi \ln TP_{av} + \mathcal{O}(1) + f(e_1)T\bigg) \sum_{j=1}^{n-1}\bigg(\frac{1}{\big(P_{j+1}\mu_{j+1}-P_j\mu_{j}\big)^2}\bigg) + \mathbb{E}\left[\mu_{1}(t)\right]f(e_2)T.
\end{aligned}
\end{equation}
This completes the proof.
\end{proof}
This theorem shows that the performance of the proposed sleeping MAB algorithm is a function of the accuracy of the predictions that are done in the previous step. This theorem shows that, for a source traffic prediction algorithm with good accuracy, after the learning period, the sleeping MAB will be able to select the most important MTD and it can achieve logarithmic regret. In MAB problems, logarithmic regret, as compared to linear regret shows that the algorithms has been able to learn the arms with higher reward and the gap between the selected arm and the best arm has become smaller \cite{sutton1998reinforcement}. In our MTC setting, this means that, our of the set of active MTDs, the one with best combination of latency requirements, wireless channel quality, and high value will be selected. Clearly, the we can change the coefficients of the reward that we have defined in (\ref{utilityfunction}) to give higher priority to the QoS of interest.
\subsection{Multiple MTD selection}
In the previous sections, we have studied the sleeping MAB algorithm for the fast uplink grant allocation problem. Most MAB algorithms are developed for selecting one arm at a time. However, in practical wireless systems, at any given time, there are multiple radio resources block that could be allocated to the MTDs, and hence, the network may need to select more than one MTD for resource allocation. Here, we extend the proposed sleeping MAB algorithm for multiple arms. We assume that there are $l$ radio resource blocks in the frequency domain that can be allocated for $l$ MTDs. In order to do this, since the criteria in selecting the best MTD in the probabilistic sleeping MAB algorithm was the MTD with highest UCB value, we extend our methods by selecting $l$ highest UCB values at each time step. This method follows the concept of best ordering of arms in MAB theory, in which, the arms are ordered based on their importance to be selected \cite{sutton1998reinforcement}. If we assume that the arms are selected one by one, after selecting the best MTD, for the next selection, we must choose the next arm with highest UCB value. Hence, the ordering of the UCB values and selecting the best $l$ MTDs is a very natural extension to the proposed probabilistic sleeping MAB. We should mention that at each time step, all of the MTDs that are active for the first time are selected first, and then other MTDs are sorted based on their UCB value. This proposed method of multiple MTD selection is summarized in Algorithm \ref{alg:mutipleResourceBlocks}. 
\begin{algorithm}[t]\small
	\caption{{Multiple MTD Selection}}
	\begin{algorithmic}
		\State Initialize $z_i$, $n_i$, $n^'_i$  for all $i\in [n]$, initialize $t^'$
		\State \textbf{for} $t=1$ to $T$ \textbf{do}
		\State \quad \textbf{if} $\exists i \in A_t$ \emph{s.t.} $n_i=0$ and $n^'_i=0$ \textbf{then}
		\State \quad \quad Play all arms with $x(t)=i$ or set $b$ = number of arms with $n_i=0$ and $n^'_i=0$
		\State \quad \textbf{else}
		\State \quad \quad Order the arms in descending oder by $P_i(t) \left(\frac{z_i}{n^'_i} + \sqrt{\frac{\psi\log t^'}{n^'_i}}\right)$ and select $(l - b)$ first arms
		\State \quad \textbf{end}
		\State \quad \textbf{if} $x(t)$ is an available arm $(x(t) \neq 0)$ \textbf{then}		
		\State \quad For all available arms, observe payoff $\theta_{x(t)}$
		\State \quad $z_{x(t)} \leftarrow z_{x(t)}$ + $\theta_{x(t)}$
		\State \quad $n^'_{x(t)} \leftarrow n^'_{x(t)} + 1$
		\State \quad $t^' \leftarrow t^' + 1$
		\State \quad \textbf{else}
		\State \quad for all non-available arms \textbf{do}
		\State \quad $z_{x(t)} \leftarrow z_{x(t)}$
		\State \quad $n^'_{x(t)} \leftarrow n^'_{x(t)}$
		\State \quad $t^' \leftarrow t^'$
		\State \textbf{end}
	\end{algorithmic}
	\label{alg:mutipleResourceBlocks}
\end{algorithm}
\section{Simulation Results}\label{Simulationresults}
\subsection{Single MTD selection}
We consider a single circular cell with radius $500$ meters consisting of $100$ MTDs with $10$ MTDs being active at each time. The noise power is considered to be $-174$ dBm/Hz and bandwidth is $360$ kHz and standard deviation for the log normal shadow fading is $10$ dB. All statistical results are averaged over a large number of independent runs. Each MTD has a reward distribution with expected value $U_i \in (0,1)$. The value of the reward function changes due to the following reasons. First, the achieved rate at each time changes due to changes in the channel quality. Second, the maximum tolerable access delay might change at different times since the packet in the MTD might face various delays. Moreover, each MTD can send packets from various applications with different data values. In the utility function, values $\alpha = 0.2$ $\beta = 0.3$, and $\gamma = 0.5$ are initially used. As needed, we change the parameters of the modified Gompertz function from Fig. \ref{gompetzfigure} based on the maximum access delay required in the system to have an accurate modeling of the latency.

In Fig. \ref{regretFigureResult}, we set $a = 1, b = 8$, and $c = 0.0.3$, and we show the regret resulting from the proposed sleeping MAB algorithm. Two intervals for the probabilities provided by a source traffic prediction are considered, one in $[0.8, 1]$ and another in $[0.9, 1]$. The result is compared to: a) A random scheduling policy, b) The case when the availability of the MTDs is not taken into account in the selection process of (\ref{maximization}) and only UCB values are used, and c) A scenario in which the prediction is error free. Fig. \ref{regretFigureResult} clearly shows that the random allocation of radio resources has linear regret which is much worse compared to the logarithmic regret achieved by the proposed solution. Fig. \ref{regretFigureResult} also shows that the proposed enhancement of our algorithm done by adding the probability in (\ref{maximization}) provides around $15\%$ and $50\%$ improvement on the performance compared to using the sleeping MAB without modification for two probability intervals. Moreover, Fig. \ref{regretFigureResult} shows that perfect prediction has the best performance. The baseline random policy performs very poorly in terms of regret as seen from Fig. \ref{regretFigureResult} since its regret increases linearly with time. 
\begin{figure}[t]
	\centering
	\includegraphics[width=10cm]{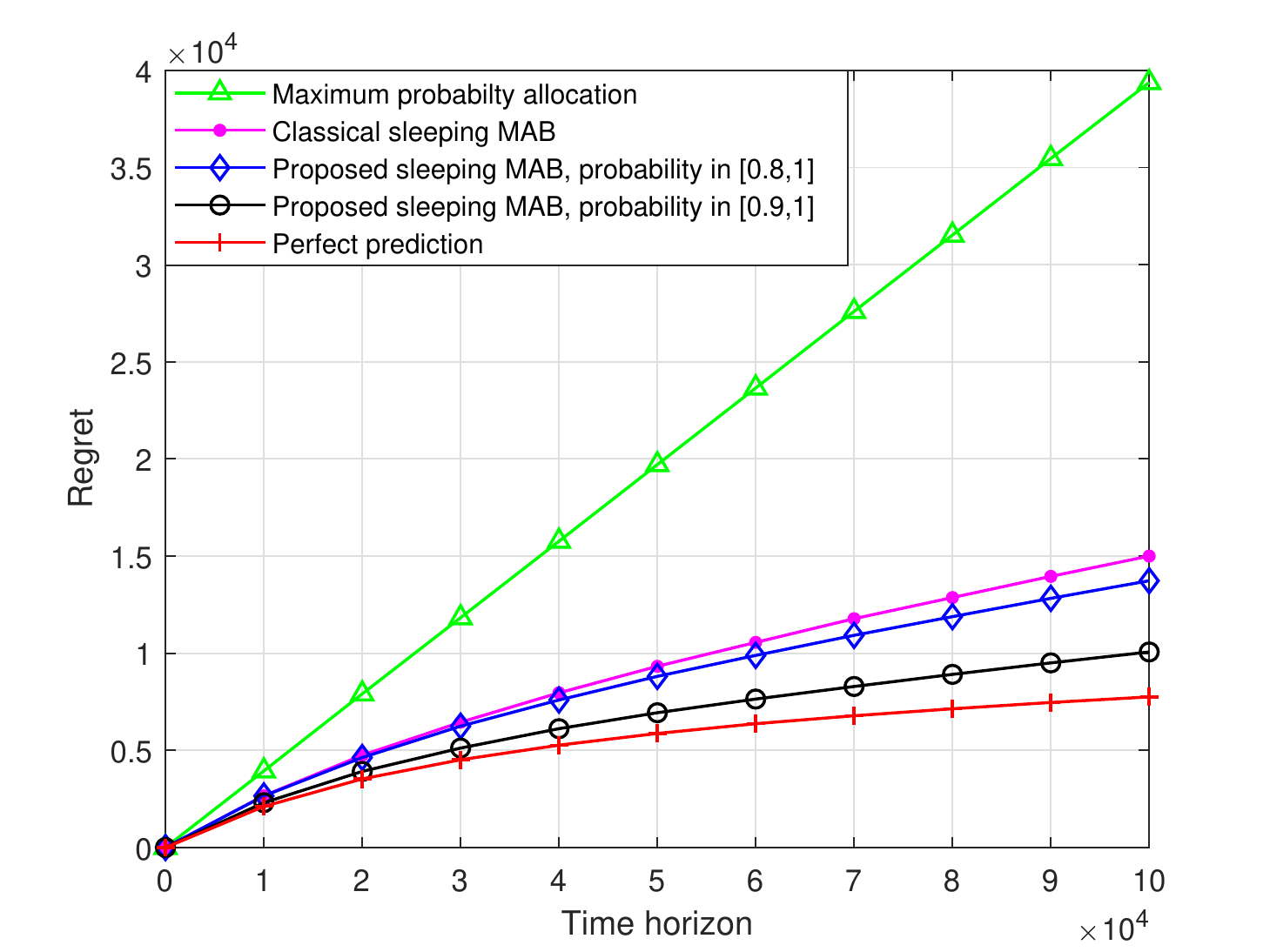}
	\caption{\small Regret resulting from the proposed probabilistic sleeping MAB compared to sleeping MAB with prediction, sleeping MAB with perfect prediction, and random allocation.}\label{regretFigureResult}
\end{figure}

In Fig. \ref{delayResult}, we consider $\alpha = \beta = 0$ and $\gamma =1$ to study the performance in terms of latency. The maximum tolerable access delay is considered to be a value in $[1, 300]$ ms and we set the parameters of the modified Gompertz function to $a = 1, b = 13$, and $c = 0.025$ with the time horizon $T = 10^6$. For every value of the maximum tolerable access delay in the system, the average maximum tolerable access delay of a random allocation policy is compared to the sleeping MAB algorithm. From Fig. \ref{delayResult}, we can see that the random allocation of the fast uplink grant achieves a delay that is equal to the average delay of the network. In contrast, the proposed algorithm is able to select MTDs with stricter latency requirements. The maximum tolerable access delay of the MTD selected by the proposed algorithm is almost three-fold less than that of the randomly selected MTD. Note that this scheduling policy not only decreases the average latency of the system but is also able to satisfy the individual latency requirements of each MTD by prioritizing the scheduling of MTDs with strict requirements.
\begin{figure}[t]
	\centering
	\includegraphics[width=10cm]{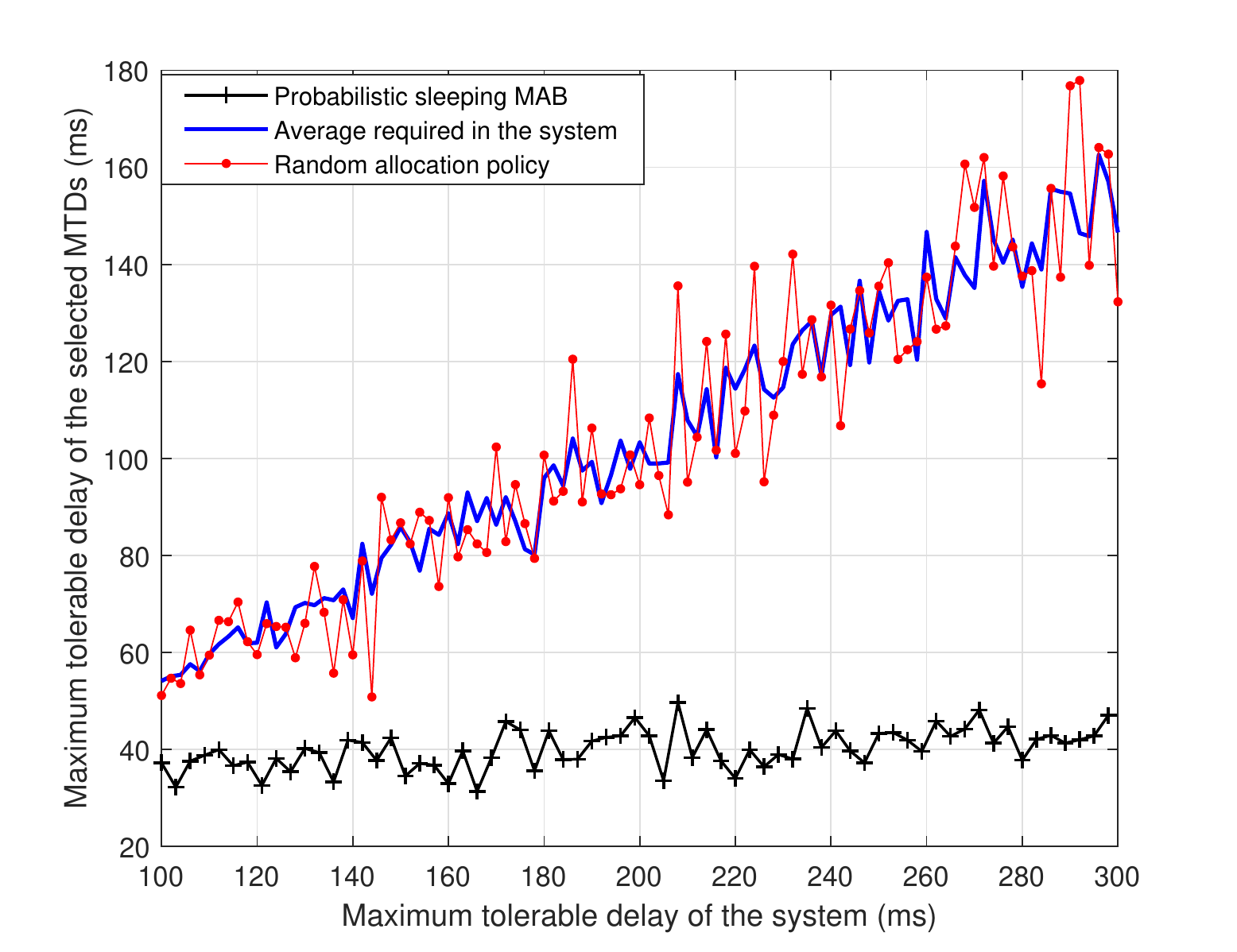}
	\caption{\small Average maximum tolerable access delay of the selected MTDs.}	\label{delayResult}
\end{figure}

The scatter plot of the latency of the selected MTD at each time is presented in Figs. \ref{scatter}(a), \ref{scatter}(b), and \ref{scatter}(c) for the proposed sleeping MAB for $\psi=1$, $\psi=6$, and $\psi=16$ respectively, and in Fig. \ref{scatter}(d) for the random allocation case. We set the maximum tolerable access delay to $100$ ms and the parameters of the modified Gompertz function to $a = 1, b = 7$, and $c = 0.07$. Each dot in these figures corresponds to the maximum tolerable access delay of the selected MTD. Fig. \ref{scatter}(a), \ref{scatter}(b), and \ref{scatter}(c) show the effectiveness of the sleeping MAB algorithm in optimizing the latency while providing fairness in the system. Those figures also show the effect of the explore/exploit control parameter $\psi$. From Figs. \ref{scatter}(a), \ref{scatter}(b), and \ref{scatter}(c), we can see that, initially, the dots are uniformly distributed which means that the MTDs are randomly selected. However, after learning, the intensity of the dots for MTDs with stricter latency requirements is much higher than that of the MTDs with larger delay requirement, which means that delay sensitive MTDs are scheduled more often. However, after after the learning period, the algorithm will keep scheduling MTDs with larger latency requirements. This increases the accuracy of the information at the BS about the latency requirements of all MTDs and also provides fairness. Moreover, if the latency requirements of an MTD has changed over time, the algorithm can discover that and start scheduling that MTD accordingly. Such a behavior shows how the proposed algorithm can balance between exploration and exploitation using parameter $\psi$. From Fig. \ref{scatter}(d), we can see that a random scheduling algorithm selects the latency completely randomly at all times and the performance of the system is much worse than the proposed sleeping MAB.

\begin{figure*}[t]
	\centering
	\subfigure[Probabilistic sleeping MAB with $\psi =1$]{\includegraphics[width=8cm]{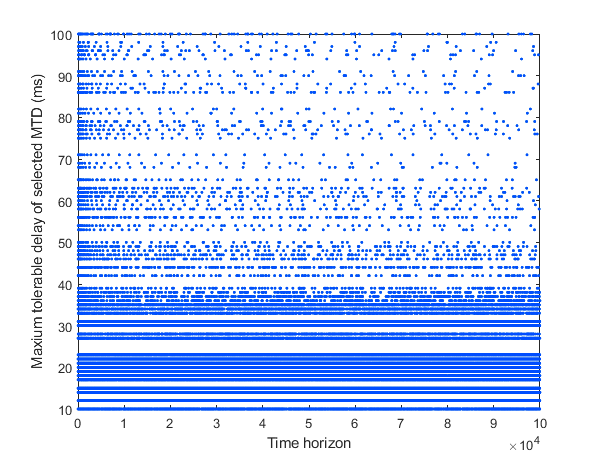}}
	\subfigure[Probabilistic sleeping MAB with $\psi =6$]{\includegraphics[width=8cm]{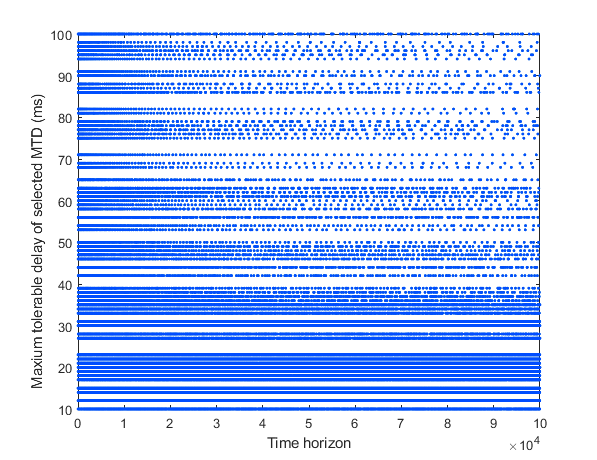}}
	\subfigure[Probabilistic sleeping MAB with $\psi =16$]{\includegraphics[width=8cm]{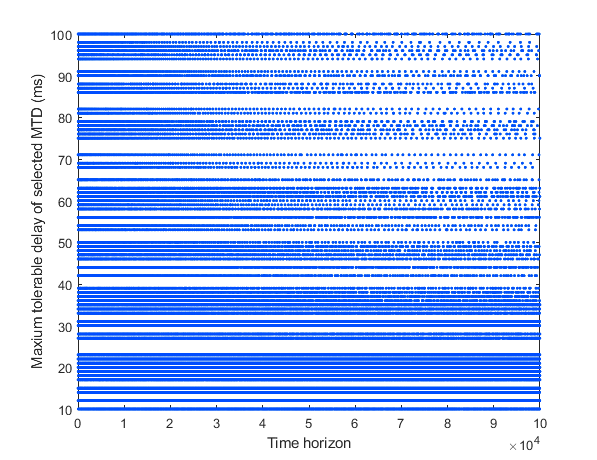}}
	\subfigure[Random allocation]{\includegraphics[width=8cm]{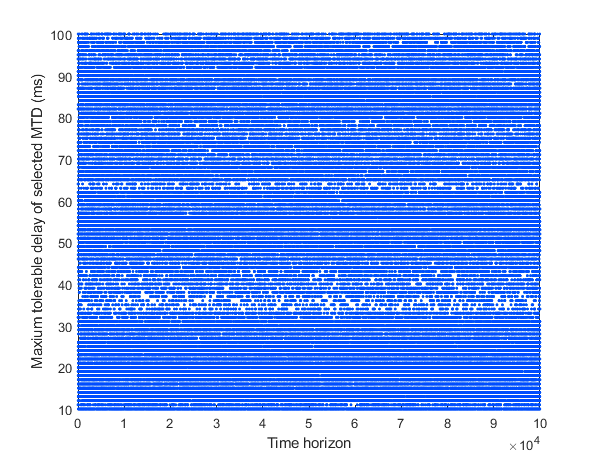}}
	\caption{\small Required access delay of the selected MTD at each time during the entire learning period. Effect of exploration-exploitation parameter $\psi$ is shown. This figure shows how our proposed method can optimize the system while providing fairness.}\label{scatter}
\end{figure*}
In Fig. \ref{th_scatter}, we present the scatter plot of the achieved throughput of the system at each time step for $\psi=0.5$, $\psi=2$, and $\psi=4$ and a random allocation policy. Here, we have set $\alpha = \gamma = 0$ and $\beta = 1$. The bandwidth is considered to be $360$ kHz and the transmit power of all the MTDs $10$ dBm. It is clear from Fig. \ref{th_scatter} that the random allocation policy, on average, achieves a lower rate. However, for $\psi=0.5$, the proposed method is showing much better average performance. 
\begin{figure*}[t]
	\centering
	\subfigure[Probabilistic sleeping MAB with $\psi =0.5$ ]{\includegraphics[width=8cm]{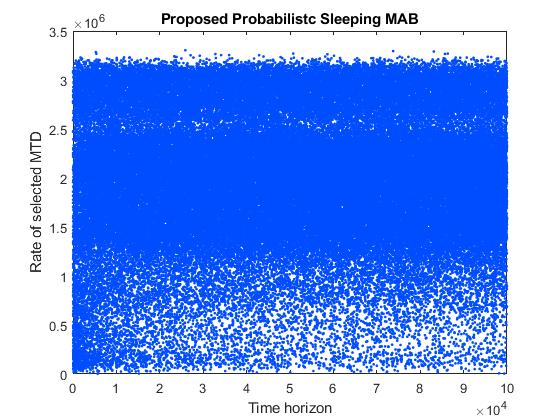}}
	\subfigure[Random allocation]{\includegraphics[width=8cm]{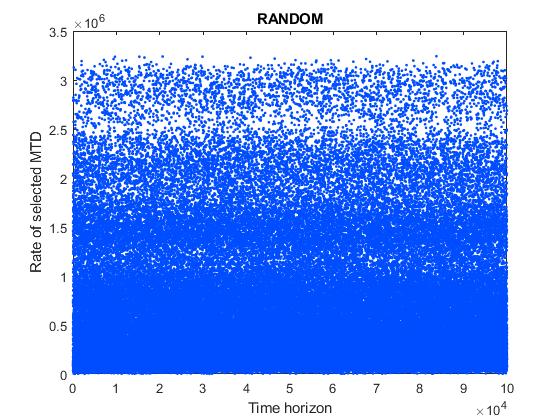}}\caption{\small Scatter plot of the achieved throughput of the system at each time step for three different values of the explore/exploit parameter compared to a random allocation policy.}\label{th_scatter}
\end{figure*}

In Fig. \ref{controller}, we present the average sum-rate of the system for the entire learning period for different values of the exploration-exploitation parameter $\psi$. Clearly, our proposed method outperforms the random allocation by up to $250 \%$. Fig. 7 shows that increasing $\psi$ decreases the capacity of the system. This means that scheduling the MTD that has higher average throughput can lead to to doubling the rate in the system. Increasing fairness is possible, but it will lead to a lower rate achieved in the system.
\begin{figure}[t]
	\centering
	\includegraphics[width=9cm]{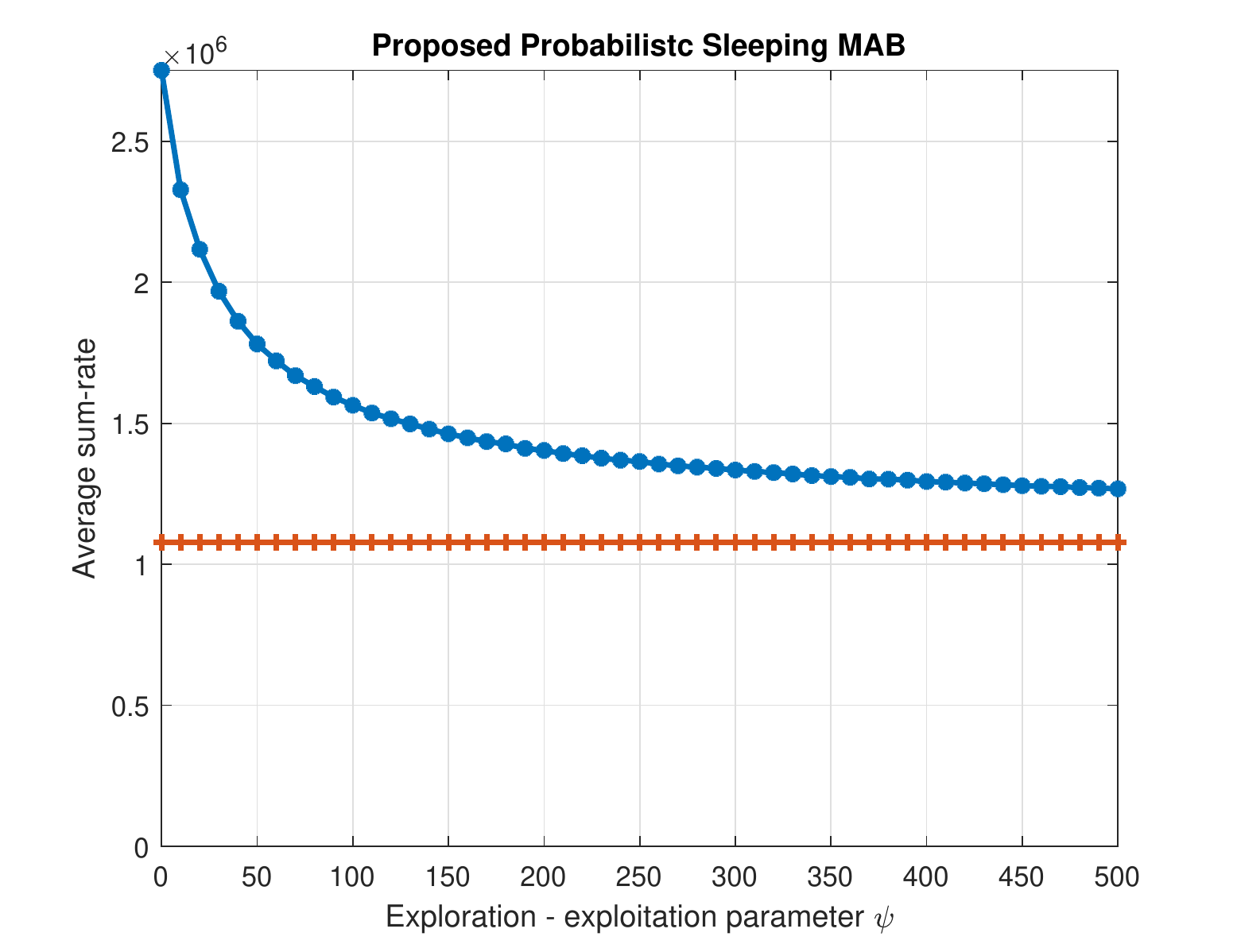}
	\caption{\small Average sum-rate of the system under different values of exploration-exploitation parameter $\psi$.}\label{controller}
\end{figure}
\subsection{Multiple Resource Blocks}
In this section, we provide the results for selecting multiple MTDs by using Algorithm \ref{alg:mutipleResourceBlocks}. Here, we consider that all the devices require the same amount of resources and one resource block is enough for transmitting the packet of each MTD. For each failed transmission, we consider the device to be available in the next time step. 

First, we provide the regret of the algorithm to study its performance. We set $a = 1, b = 8$, and $c = 0.0.3$, and the utility function values $\alpha = 0.2$ $\beta = 0.3$, and $\gamma = 0.5$.  We assume that there are $500$ MTDs in the system and, at each time, $50$ MTDs are active and $20$ MTDs can be scheduled at each time. We consider two intervals for the probability of being active that is provided by a source traffic prediction algorithm, one in $[0.8, 1]$ and another in $[0.9, 1]$. The regret of the proposed probabilistic sleeping MAB is compared to the random baseline scenario and perfect prediction. Fig. \ref{multipleRegret} shows that the proposed method achieves logarithmic regret. We observe that compared to the random allocation policy, the regret achieved by our proposed probabilistic sleeping MAB is nearly three and four times lower for the two different probability intervals that we considered. Fig. \ref{multipleRegret} naturally confirms that the perfect prediction scheme achieves the best performance.

\begin{figure}[t]
	\centering
	\includegraphics[width=9cm]{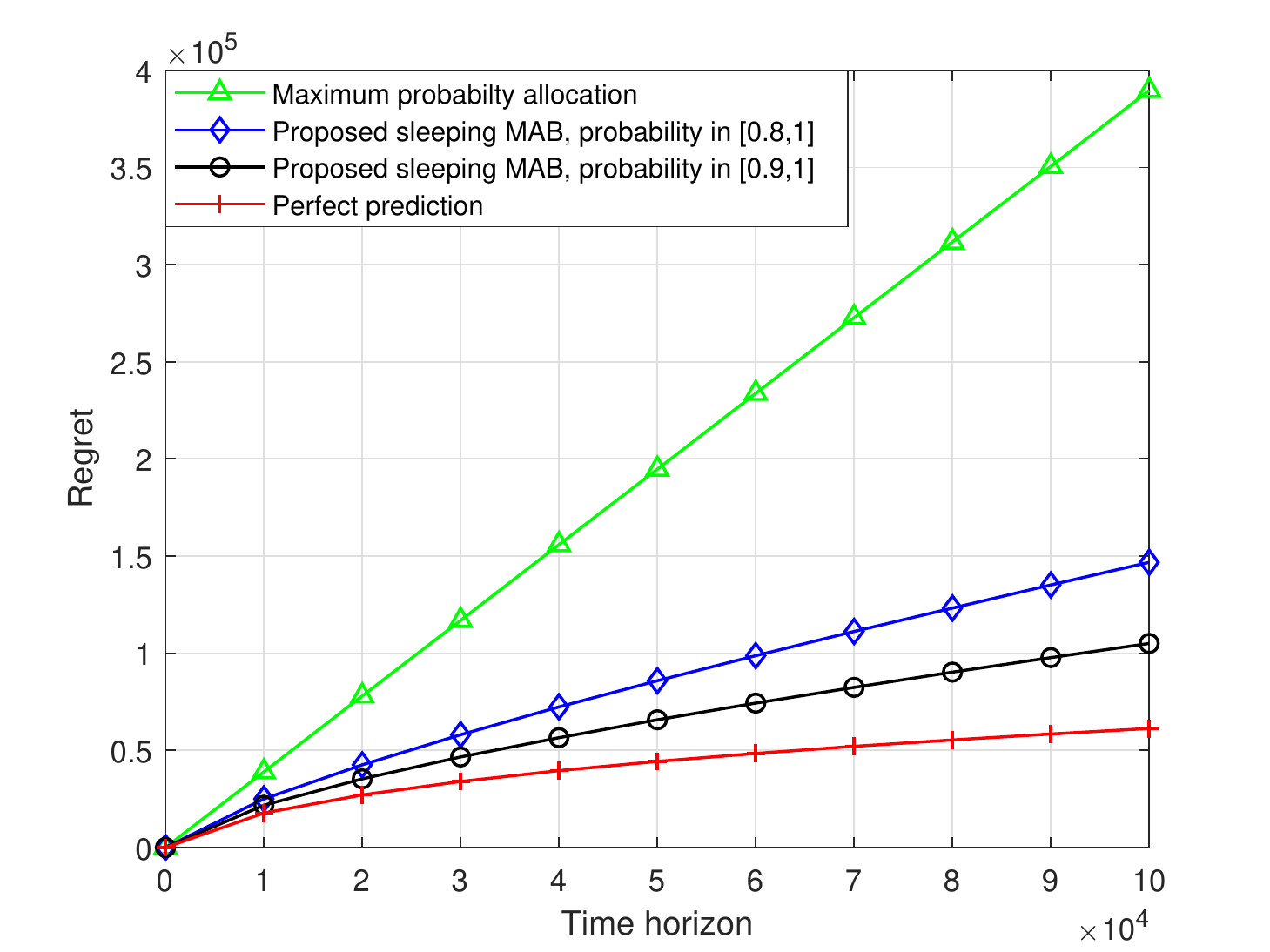}
	\caption{\small Regret resulting from the proposed probabilistic sleeping MAB compared to sleeping MAB for multiple resource blocks with prediction, sleeping MAB with perfect prediction, and random allocation.}\label{multipleRegret}
\end{figure}
In Fig. \ref{multiple_delay}, we present the average delay of the selected MTDs with $\alpha = \beta = 0$ and $\gamma =1$. The maximum tolerable access delay is considered to be a value in $[1, 300]$ ms and we set the parameters of the modified Gompertz function to $a = 1, b = 13$, and $c = 0.025$ with the time horizon $T = 10^6$. It is clear that the proposed probabilistic sleeping MAB algorithm is able to provide much better average achieved access delay in the system. One must note that the achieved average access delay is almost constant for any value of the maximum tolerable delay, since the select MTDs are averaged. This shows that, in real-time systems, by increasing the number of MTDs, our proposed solution achieves almost a three-fold improved performance compared to baseline methods. This is an interesting result since it shows that, for a massive access scenario, our proposed method is able to achieve very low access delay. In contrast, conventional random access based systems experience excessive delays due to collisions. 

\begin{figure}[t]
	\centering
	\includegraphics[width=10cm]{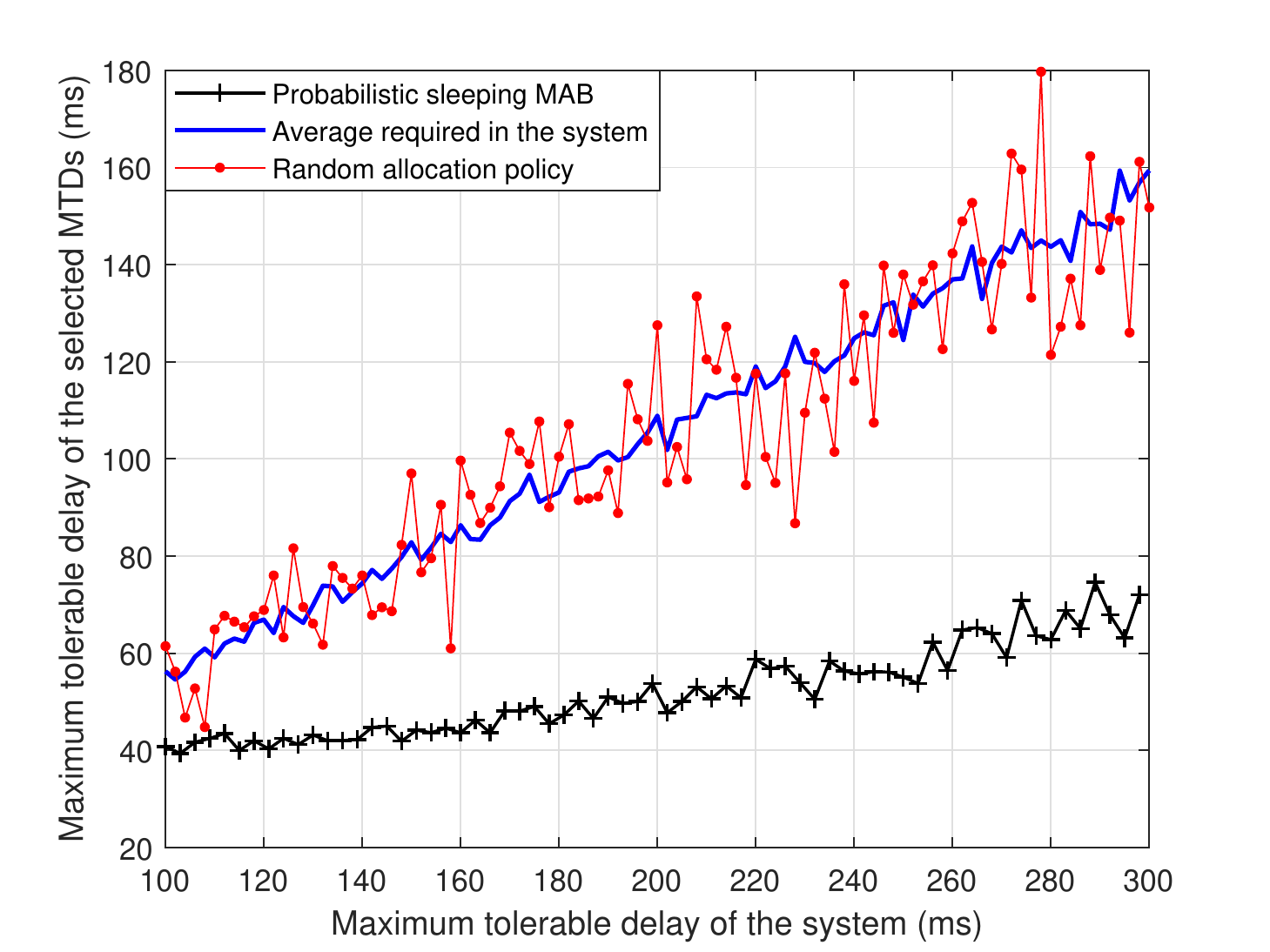}
	\caption{\small Average achieve access delay in the system for all the selected MTDs.}\label{multiple_delay}
\end{figure}

\begin{figure*}[t]
	\centering
	\subfigure[Probabilistic sleeping MAB]{\includegraphics[width=8cm]{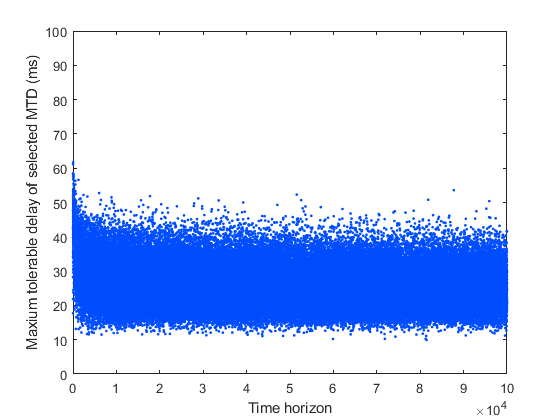}}
	\subfigure[Random allocation]{\includegraphics[width=8cm]{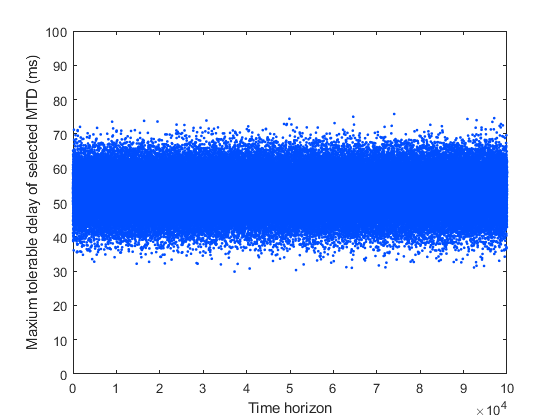}}\caption{\small Scatter plot of the average access delay of the system.}\label{multiple_delay_scatter}
\end{figure*}
In Fig. \ref{multiple_delay_scatter}, a scatter plot of the average access delay of the selected MTDs is presented for our proposed solution compared to a random allocation policy.  We set the maximum tolerable access delay to $100$ ms, and the parameters of the modified Gompertz function to $a = 1, b = 7$, and $c = 0.07$. Each dot in \ref{multiple_delay_scatter} captures the average of the maximum tolerable access delay of the selected MTDs. From \ref{multiple_delay_scatter}, we can clearly observe that the proposed sleeping MAB achieves a better performance and can improve the average latency in the system. Moreover, there is a balance between selecting the MTDs with the most strict access delay requirements and exploring other MTDs so as to provide fairness, which can be done by changing $\psi$.

\section{Conclusions}\label{conclutions}
In this paper, we have introduced a novel sleeping MAB framework for optimal scheduling of MTDs using the fast uplink grant. First, we have devised a mixed QoS metric based on a combination of the value of the data, rate of the link, and maximum tolerable access delay of each MTD. Second, we have used that metric as a reward function in a MAB framework whose goal is to find the best MTD at each time for scheduling. Moreover, we have considered an imperfect source traffic prediction where each MTD in the set of active MTDs has a probability of being active. Then, we have proposed a probabilistic sleeping MAB framework to solve the problem of fast uplink grant allocation. We have analytically studied the regret of the proposed probabilistic sleeping MAB and we have shown how the errors in the source traffic prediction algorithm impact the performance of the proposed sleeping MAB compared to the case of perfect source traffic prediction. Moreover, we have extended the sleeping MAB algorithm for selecting multiple arms at each time to use it in scenarios where more than one MTD is scheduled at each time. Simulation results have shown that the proposed algorithm performs much better than a random allocation policy, and can achieve almost three-fold performance gain in terms of latency and throughput. To the best of our knowledge, this is the first paper that addresses the optimal allocation of the fast uplink grant for MTC.
\bibliographystyle{IEEEtran}
\bibliography{bib_ref_mab_jrnl}
\end{document}